\renewcommand{\baselinestretch}{1.4}
\newtheorem{definition}{Definition}
\newtheorem{theorem}{Theorem}
\newtheorem{lemma}{Lemma}
\newtheorem{corollary}{Corollary}
\newtheorem{property}{Property}
\newtheorem{observation}{Observation}
\newenvironment{proof}           
{
  \noindent{\bf Proof.} 
}
{
  $ $\newline
}
\definecolor{mygray}{gray}{0.3}
\newtheorem{algX}{Algorithm}
                                 {\end{sf}\par\noindent
                                 --- End of Procedure ---
                                 \end{em}\end{algX}}
\newcommand{\ee}{\mathcal{E}}
\renewcommand{\v}[1]{\mathcal{V}_{#1}}
\renewcommand{\l}[3]{
  \ifstrequal{#1}{}
    {\ell^{#3}_{#2}}
    {\ell^{#3}_{#2}({#1})}
}
\newcommand{\departurecurve}{\mathcal{D}}
\newcommand{\steinerpoints}{\mathcal{S}}
\newcommand{\alignset}{\mathcal{L}}
\newcommand{\weight}{\mathcal{W}}
\newcommand{\weights}[2]{\mathcal{W}({#1}, #2)}
\newcommand{\invalidinterval}{\mathcal{B}}
\newcommand{\robotpath}{\mathcal{R}}
\newcommand{\T}{\lambda}
\newcommand{\aus}[2]{\overline{A}(t)}
\newcommand{\fus}[2]{\overline{Aprx}(t)}
\newcommand{\vr}{\mathcal{V}_{max}}
\newcommand{\dd}{\beta}
\begin{document}

\title{Query Shortest Paths Amidst Growing Discs}

\author{Arash Nouri and J\"org-R\"udiger Sack 
\\  arash,sack@scs.carleton.ca \\ \\ Carleton University, Ottawa, Canada}

\maketitle

\begin{abstract}

The determination of collision-free shortest paths among growing discs has previously been  studied for discs with fixed growing rates  \cite{shortest_path_in_growing_disc_referred, growing_discs_overmars, yi-thesis}. 
Here, we study a more general case of this problem, where: (1) the speeds at which the discs are growing are polynomial functions of degree $\dd$, and (2) the source and destination points are given as query points. We show 
how to preprocess the $n$ growing discs so that, for two given query points $s$ and $d$, a shortest path from $s$ to $d$ can be found in $O(n^2 \log (\dd n))$ time.
The preprocessing time of our algorithm is $O(n^2 \log n + k \log k)$ where $k$ is the number of intersections between the growing discs and the  tangent paths (straight line paths which touch the boundaries of two growing discs). 
We also prove that $k \in O(n^3\dd)$.
 \end{abstract}


\section{Introduction} 
 
Motion planning is a fundamental problem that arises in a number of  application areas including robotics, navigation and GIS. Often environments are not static and therefor motion planning algorithms which account for the dynamic nature of the obstacles attracted interest in computational geometry (see e.g. \cite{Mitchell1, Shortest-path-and-networks, planning-algorithms}). 
 
Path planning in dynamic environments was first studied by adding time as a dimension to the configuration space \cite{reif,erdmann}.
Erdmann and Lozano-Pérez \cite{erdmann} proposed a time discretization method to create a sequence of configuration space slices. Each slice represents the configuration space at a specific time instance.
They then solved the path planning problem in each slice and joined the adjacent solutions
to obtain a global solution. in many cases, in order to plan a path,
future trajectories of the moving obstacles are estimated using their previous velocities \cite{vandenberg,fiorini,high-speed-autonomous}. In these methods, when the obstacle trajectories change, the shortest path may need to be re-planned. This could be a slow and costly operation in highly dynamic
environments.
The challenge is  increased even further, if the obstacles' motions are unpredictable. Overmars and van den Berg
\cite{path-finding,shortest_path_in_growing_disc_referred} modeled such obstacles by discs, that grow over time.
Since the obstacles are located inside the growing discs at any time, the shortest paths are guaranteed
to be collision-free, no matter how often the obstacles change
their velocities in the future.
In this model,
each disc has a fixed growing
velocity which does not exceed the maximum
velocity of the robot. 
The goal is to find a collision-free time-minimal path from a source point to a destination point. This problem also finds applications in dynamic environments where the obstacles can be modeled as growing discs. Furthermore, in the context of video games, shortest path computations involving growing discs are carried out to model avoidance of obstacles whose maximum speed is known but not the direction of their movements.

In this paper, we study the query time-minimal path problem among growing discs (QTMP for short): Given a point robot $r$ moving with maximum velocity $\vr$ and a set of discs $\mathscr{D}=\{C_1, ..., C_n\}$, where the
radius of each disc $C_i$  at time $t$ grows with velocity $\v{i}(t)$ and $\v{i}(t)<\vr$. Let $\v{i}:T \rightarrow [0, \vr)$ be a polynomial function of degree $\dd$. In this setting, a query consists of a pair $(s, d)$, where $s$ is the source point and $d$ is the destination point. The problem is to
 determine a collision-free time-minimal path for a point robot, from $s$ to $d$. 
 
\textbf{Previous work}. 
Previous research focused on time-minimal path problems among growing discs, when $s$ and $d$ are fixed.
 Chang et al. \cite{chang} proposed an $O(n^2\log n)$ time algorithm for a special problem instance when $\v{i}(t)=0$ (i.e., the discs do not grow).
They proved that their algorithm is computable for arbitrary algebraic inputs.
Overmars and van den Berg
\cite{path-finding,shortest_path_in_growing_disc_referred} gave an $O(n^3\log n)$ algorithm for the case where the discs grow with same constant velocity ($\v{i}(t) = c$, for some constant $c$). They used a
3-dimensional model where the $x$ and $y$ axes represent
location and the $z$ axis describes time. Then,
each disc  obstacle is modeled by a cone which
shows the obstacles' growth over time. In this approach, a $z$-monotone shortest path from the source  to the destination,
consists of straight line segments 
which are tangent to pairs of
cones and spiral segments on the surface of a cones. The projection of this path on the $xy$ plane is a time-minimal path between $s$ to $d$.
In \cite{yi-thesis}, Yi et al. improved this approach and presented an  $O(n^2\log n)$ time algorithm. They also gave an $O(n^3\log n)$ time algorithm for different but fixed velocity discs ($\v{i}(t) = c_i$, for some constant $c_i$ per disc $C_i$).  In this method, the tangents between the discs are identified as moving line segments. By calculating the times at which the endpoints of tangents meet, the collision-free tangents are identified. Then, a time-minimal path is built from the collision-free tangents. However, they did not take into account that these endpoints may meet more than once. 
A summary of the previous work is presented in Table \ref{pre-works}.

\begin{center}
\centering
  \begin{tabular}{ |c|c|c|c|c|  }

   \hline
   Growth rates  & Velocity functions   & Time complexity & Space & Reference\\
   \hline
   Static discs &$\v{i}(t) = \v{j}(t) = 0$ &  $O(n^2 \log n)$ & $O(n^2)$ & \cite{chang}\\
   Same speed discs &$\v{i}(t) = \v{j}(t) = c$ &  $O(n^3 \log n)$ &  $O(n^3)$ & \cite{path-finding,shortest_path_in_growing_disc_referred}\\
   Same speed discs &$\v{i}(t) = \v{j}(t) = c$ &  $O(n^2 \log n)$ & $O(n^2)$ & \cite{yi-thesis}\\
   Various speed discs  &$\v{i}(t) = c $ and $ \v{j}(t) = c'$ &  $O(n^3 \log n)$ & $O(n^3)$ & \cite{yi-thesis}\\
   \hline
  \end{tabular}
\captionof{table}{Note that $c$ and $c'$ are constant values where $1 \leq c, c' < \vr$.} \label{pre-works} 
\end{center}

 

\textbf{Contributions}. 
In this paper we present an algorithm for QTMP problem while has $O(n^2 \log (kn))$ query time after $O(n^2 \log n + k \log k)$ preprocessing time, where $k$ is the size of the arrangements formed by the \textit{departure curves}\footnote{ Refer to Section \ref{section-preliminary} for the formal definition.}. Not only does the algorithm generalizes the velocity functions to polynomial of degree $\dd$, it also solve the problem in query mode setting which decreases the time complexity of the existing algorithm \cite{yi-thesis} for the problem with different, but constant velocities (when $\dd = 0$) by a factor of $n$. 
We also establish an upper bound of  $O(n^3\dd)$ for $k$.

We obtain this result by associating a graph called the \textit{adjacency graph}$^1$
with the growing discs. Each edge in the adjacency graph represents a robot's path, which is collision-free at some specific time intervals. In the preprocessing phase, we construct some data structures, such that we can quickly determine if a query path is collision-free. Then, we run Dijkstra's algorithm to construct a time-minimal path out of the valid paths in the graph. 
We also fill the gaps while had been left from previous work. More specifically, our contributions are: 
\begin{itemize}
\item We derive a sharp upper bound on the number of intersections between departure curves (Section \ref{upperk}).
\item We prove some new properties of time-minimal paths which might be useful for other studies (Section \ref{section-preliminary}).
\item By synchronization of the discs from constants to polynomial, we study a more general class of motion planning problems.
\item We provide a solution which is simple and uses a reduction to a shortest path problem in graphs, which makes it also easier to implement.
\item We design a preprocessing algorithm which allow us to run $s$ and $d$ queries efficiently. Our query algorithm also decreases the existing time complexity for the different constant velocity discs problem from the previous work \cite{yi-thesis} by a factor of $n$. 
\end{itemize}

The remainder of the paper is organized as follows. Section \ref{section-preliminary} describes the preliminaries,  definitions and notations. 
 Section \ref{upperk} presents an upper bound for $k$ which affects the time and space efficiency of our algorithm. The main algorithm for the QTMP problem is described in Section \ref{section-basic-algorithm}. Finally, in Section \ref{section-conclusions} we conclude with open problems and future works.

\section{Preliminaries}\label{section-preliminary}
In this section, we formally define the QTMP problem.
We also introduce some preliminary concepts which are essential for the rest of this section.
  
Given are a point robot $r$ while moves with maximum velocity $\vr > 0$ and
a set of growing discs $\mathscr{D}=\{C_1, ..., C_n\}$.
A disc $C_i$ is given as a pair $(o_i, r_i(t))$, where $o_i$ is the center point and $r_i(t)$ is the radius of $C_i$ at time $t$. 
$\v{i}(t)$ is the velocity by which $C_i$'s radius grows at time $t$, where $\v{i}:T \rightarrow [0, \vr)$ is a polynomial function of degree $\dd$, such that $\v{i}(t) < \vr$. The following equation is readily obtained: $r_i(t) = \v{i}(t) t + R_i$, where $R_i$ is the initial radius of $C_i$ at time $t=0$. 
We denote by $C_i(t)$ the boundary of disc $C_i$ at time $t$. 
 A \textit{robot-path}  $\T$ between a start point $s$ and a destination point $d$ is given as function $\T:[T_s, T_d] \rightarrow {\rm I\!R}^2$, where $\T(t)$ returns the location of the robot at any time $t \in [T_s, T_d]$, provided that $\T(T_s)=s$ and $\T(T_d)=d$. We call $T_s$ \textit{departure time} and $T_d$ \textit{arrival time}.
 A path $\T$ \textit{intersects} disc $C_i$, if there exist a time $t \in [T_s, T_d]$, at which $\T(t)$ returns a point located in the interior of disc $C_i(t)$. 
  We call a path $\T$ \textit{valid} (or \textit{collision-free}) if it does not intersect any of the discs in $\mathscr{D}$; otherwise, $\T$ is \textit{invalid}. Without loss of generality, we assume that for all paths, $T_s = 0$ in our algorithms.

\begin{definition}\label{tmp}
A \textbf{time-minimal path} is defined as a valid path with arrival time $T_d$, where $T_d$ is minimized for all valid paths.
\end{definition}

The problem of finding a time-minimal path between two query locations $s$ and $d$ is referred to as 
\textit{Query Time-Minimal Path} (QTMP for short) problem. 
 In this section, we present an algorithm for the QTMP problem.

\begin{definition}\label{aligned-paths}
Two robot-paths $\T$ and $\T'$ are called \textbf{aligned paths}, if there exists a non-empty interval $[a, b]$, where $a < b$ and $\T(t) = \T'(t)$ for any $t \in [a,b]$. 
\end{definition}

 The following observation is an important property of time-minimal paths as derived in  \cite{shortest_path_in_growing_disc_referred}.

\begin{observation}\label{max-velocity}
On any time-minimal path, the point robot always moves with maximal velocity of $\vr$.
\end{observation}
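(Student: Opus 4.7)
The plan is to argue by contradiction: assume a time-minimal path $\T$ along which the robot moves with speed strictly less than $\vr$ on some non-trivial sub-interval, and produce a valid path $\T'$ that arrives strictly earlier at $d$, contradicting Definition \ref{tmp}.

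First I would reparametrize $\T$ by arc length and then traverse the resulting spatial curve at the constant maximal speed $\vr$. Formally, let $L$ denote the arc length of $\T$ and let $\sigma:[0,L/\vr] \to {\rm I\!R}^2$ be the unit-speed-times-$\vr$ parametrization of the same geometric curve, so $\sigma(0)=s$ and $\sigma(L/\vr)=d$. Because $\T$ had average speed at most $\vr$, with strict inequality on a set of positive measure by assumption, the new arrival time $T_d' := L/\vr$ is strictly less than the original arrival time $T_d$. Thus $\T'$ beats $\T$ in arrival time, and it only remains to verify that $\T'$ is collision-free.

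For the validity check, I would define the reparametrization map $\phi:[0,T_d'] \to [0,T_d]$ by letting $\phi(t)$ be the unique time at which $\T$ occupies the same spatial point that $\T'$ occupies at time $t$, i.e.\ $\T'(t)=\T(\phi(t))$. Since $\T$ never exceeds speed $\vr$ while $\T'$ moves exactly at $\vr$, we have $\phi(t) \ge t$ for every $t \in [0,T_d']$. Now fix any disc $C_i$ and any $t \in [0,T_d']$. Validity of $\T$ at time $\phi(t)$ gives
\[
\|\T'(t)-o_i\| \;=\; \|\T(\phi(t))-o_i\| \;>\; r_i(\phi(t)).
\]
Because the discs are growing, the radius function $r_i$ is non-decreasing in $t$, so $r_i(\phi(t)) \ge r_i(t)$ and hence $\|\T'(t)-o_i\|>r_i(t)$. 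Consequently $\T'$ is collision-free, and the strict inequality $T_d' < T_d$ contradicts the time-minimality of $\T$.

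The only subtle point, and the one I would be most careful about, is the monotonicity of $r_i$: the paper writes $r_i(t)=\v{i}(t)\,t+R_i$, so one must use the standing assumption $\v{i}(t)\in[0,\vr)$ together with the convention that the discs are \emph{growing} to justify $r_i(\phi(t)) \ge r_i(t)$ whenever $\phi(t) \ge t$. Once this monotonicity is in hand, the argument above is essentially immediate, and it additionally yields the stronger observation that any delay in the robot's motion is strictly harmful, which is implicitly used in later sections.
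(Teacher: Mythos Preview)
The paper does not give its own proof of this observation; it merely attributes the statement to \cite{shortest_path_in_growing_disc_referred}. Your argument is the standard one and is correct: traverse the same geometric trace at full speed, note that the new schedule is everywhere at least as early as the old one, and use monotone growth of the discs to conclude that validity is preserved.

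One small technical point worth tightening: your description of $\phi(t)$ as ``the unique time at which $\T$ occupies the same spatial point that $\T'$ occupies at time $t$'' can fail to be well defined if the geometric trace of $\T$ self-intersects. The clean fix is to define $\phi$ through arc length rather than through spatial coincidence: let $s(\tau)$ be the arc length of $\T|_{[0,\tau]}$ and set $\phi(t)=\inf\{\tau: s(\tau)\ge \vr t\}$. Then $\T'(t)=\T(\phi(t))$ by construction, and the inequality $\phi(t)\ge t$ follows directly from $s(\tau)\le \vr\tau$. With this adjustment your proof goes through verbatim, and your closing remark about needing the discs to be genuinely growing (so that $r_i$ is non-decreasing) is exactly the right hypothesis to flag.
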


\begin{figure}
\centering
\includegraphics[width=400pt]{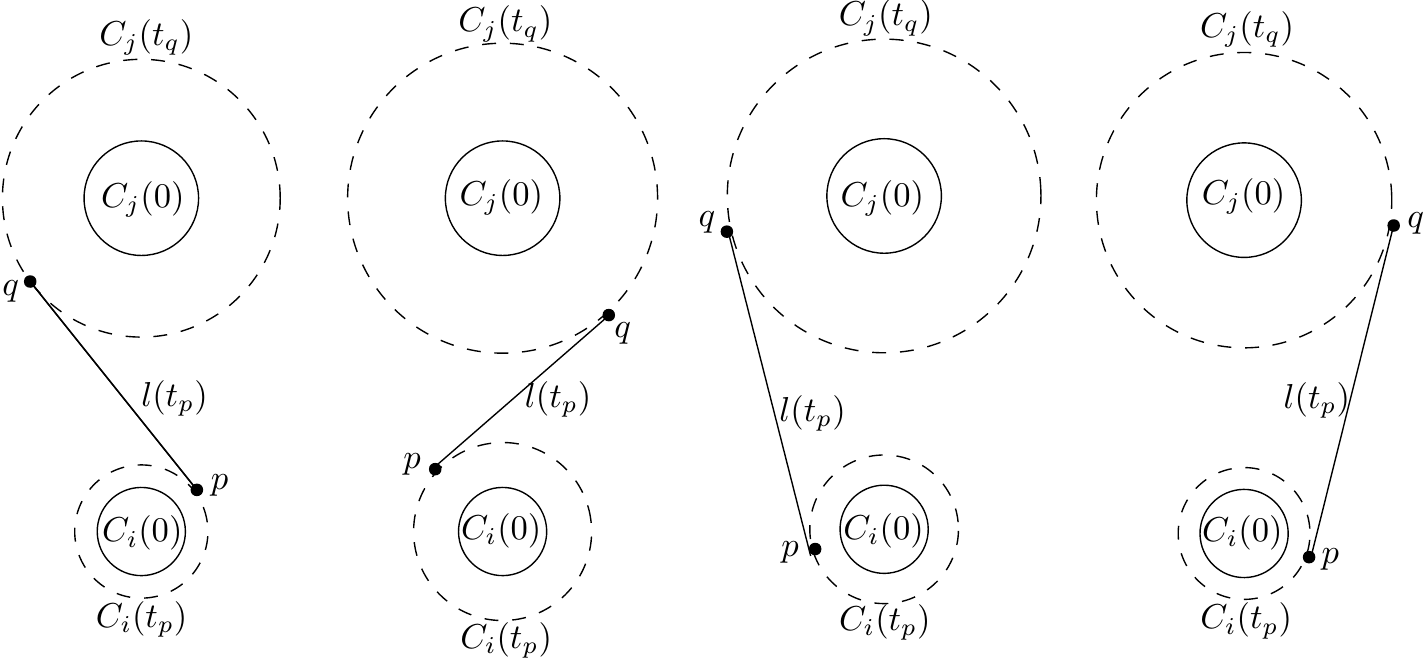}
\caption{This figure illustrates the four tangent paths between two discs $C_i$ and $C_j$. The left two figures show the inner tangent paths and the right two show the outer tangent paths. Note that $t_p$ is the departure time and $t_q$ is the arrival time where $t_q > t_p$.}\label{group-tangent-lines}
\end{figure}

Our approach to solve the QTMP problem involves examining the tangent lines between the
growing discs. A \textit{tangent line} $L$ between two discs $C_i$ and $C_j$, is a straight line which touches the boundaries of the two discs at exactly two points: $p \in C_i(t_p)$ and $q \in C_j(t_q)$. Refer to Figure \ref{group-tangent-lines},
we define a \textit{tangent path} $l(t_p, t_q)=\overline{pq}$ as a subsegment of $L$ between two points $p$ and $q$, where $|l(t_p, t_q)| = \vr (t_q - t_p)$. Note that $t_q$ is the arrival time at point $q$ if the robot departs from $p$ at time $t_p$. Refer to Observation \ref{max-velocity},  one can determine the arrival time $t_q$ when the departure time $t_p$ is given. So, for ease of notation, we denote a tangent $l(t_p, t_q)$ by $l(t_p)$ when no ambiguity arises.

As illustrated in Figure \ref{group-tangent-lines}, there are two types of tangents between any pair of discs: those for which corresponding discs are located on opposite sides of the tangent line, called \textit{inner} tangents, and those for which corresponding discs are located on same side of the tangent line, called \textit{outer} tangents. Following this definition, each pair of discs has two inner and two outer tangent paths.

Given two departure times $t$ and $t'$, where $t \not= t'$; due to the dynamic nature of the growing discs, it is observed that for any pair of discs $l(t) \not= l(t')$.
 The two (moving) endpoints of a tangent path, are called \textit{Steiner points}. The equation which calculated a Steiner point's motion is called a \textit{departure curve} and can be found in Section \ref{upperk}. 

 \begin{definition}\label{dc}
A \textbf{departure curve} $D_p:T \rightarrow {\rm I\!R}^2$ is a function specifying the location, $D_p(t)$ , of Steiner point $p$ at any time $t \in T$.
\end{definition}

\begin{figure}
\centering
\includegraphics[width=120pt]{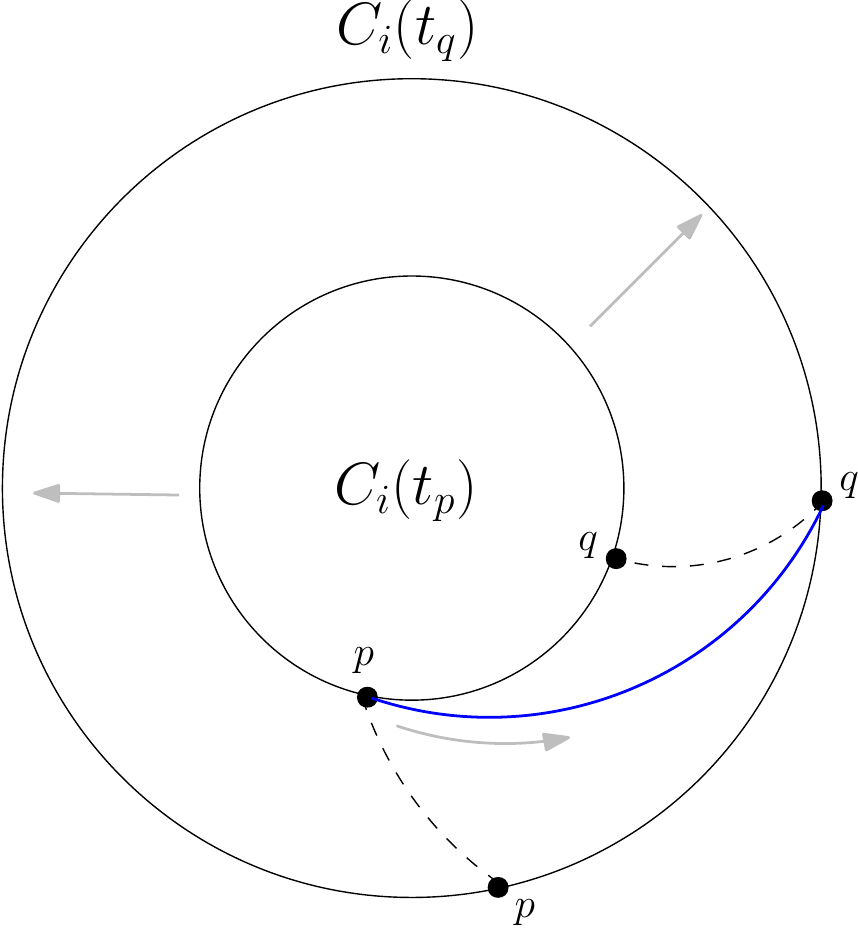}
\caption{The blue curve illustrates spiral path $s=\wideparen{pq}$. The point robots leaves Steiner point $p$ at time $t_p$, moves clock-wise on the boundary of $C_i$ and and arrives at Steiner point $q$ at time $t_q$.}\label{fig-spiral}
\end{figure}
 
Let $\steinerpoints = \cup_{1 \leq i \leq n} \{\steinerpoints_i\}$ where $\steinerpoints_i$ is a set of  Steiner points located on the boundary of disc $C_i$. Refer to Figure \ref{fig-spiral}, 
a \textit{spiral path} $s(t_p)$ is a robot-path where: (1) the robot departs from Steiner point $p \in \steinerpoints_i$ at time $t_p$, (2) moves either clockwise or counter-clockwise on the boundary of disc $C_i$ with speed $\vr$ and (3) stops when the next Steiner point $q \in \steinerpoints_i$ is visited. Like a tangent path, a spiral path is defined as a robot-path between two Steiner points. The main difference between these two path types is that the Steiner points of a tangent path are located on separate discs, while those of a spiral path are located on the same disc.

Let us define $\ee$ as a set of all spiral and tangent paths. Each $pq \in \ee$ represents a robot-path where the point robot departs from $p$ and moves with the maximal velocity along the path, until it arrives at $q$. Note that for a path $pq \in \ee$ and a given departure time $\tau$, there exists a unique robot-path on the plane.  
The equation upon which these robot-paths are calculated can be found in Section \ref{upperk}. 
In the following, we prove that any time-minimal path consists only of paths in $\ee$.

\begin{lemma}\label{tang-spir}
A time-minimal path among growing discs is solely composed of tangent and/or spiral paths.
\end{lemma}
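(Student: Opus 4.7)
The plan is to start from Observation \ref{max-velocity}: since the robot moves with speed $\vr$ at every moment on a time-minimal path $\T$, minimizing arrival time is equivalent to minimizing arc length along $\T$, provided the path remains collision-free with respect to the time-varying obstacles. I would then decompose $\T$ into maximal sub-arcs of two types: (a) \emph{boundary arcs}, on which $\T(t)$ lies on the boundary $C_i(t)$ of some disc for every $t$ in the sub-interval, and (b) \emph{free arcs}, on which $\T(t)$ lies strictly outside every disc $C_j(t)$. This decomposition is well defined because the set of times where $\T(t)\in C_j(t)$ for some $j$ is closed.

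For each free arc with endpoints $a=\T(t_1)$ and $b=\T(t_2)$, I would show it must be a straight line segment: the open tube around the arc is obstacle-free at the relevant times (by continuity and $\v{i}(t)<\vr$ so the growing boundaries do not sweep across a small enough tube during $[t_1,t_2]$), so any shorter curve in that tube is also valid; minimality of length then forces the straight segment $\overline{ab}$. Next, I would argue that the endpoints of every free segment are either the query points $s, d$ or points on some disc boundary. Indeed, if an endpoint lay strictly in the interior of the free region, we could extend (or locally smooth a corner) and obtain a strictly shorter valid path, contradicting minimality and the maximality of the decomposition.

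For an endpoint $p=\T(t_p)$ of a free segment that lies on $C_i(t_p)$, I would show $\overline{ab}$ must be a \emph{dynamic tangent} to $C_i$ at time $t_p$, i.e., the segment is in the sense of the tangent-path definition preceding the lemma. Suppose not; then the segment either (i) enters the interior of $C_i$ on one side of $p$, giving an immediate collision, or (ii) leaves a positive wedge between $\overline{ab}$ and $C_i(t_p)$ near $p$. In case (ii) one can slide the endpoint $p$ slightly along $C_i$ to a nearby Steiner point and obtain a strictly shorter valid segment, using that $\v{i}(t)<\vr$ to guarantee the slid segment clears $C_i(t)$ for $t$ near $t_p$. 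This contradicts length-minimality. The same argument is applied at the other endpoint. Hence each free segment coincides with a tangent path between two discs (or a half-tangent at $s, d$).

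For each boundary arc on $C_i$, the robot is constrained to $C_i(t)$ while moving at speed $\vr$; this is, by definition, a spiral path. Its two endpoints are the departure/arrival points of the adjoining free segments, which by the previous paragraph are tangent contacts, hence Steiner points in $\steinerpoints_i$. Assembling the pieces shows that $\T$ is a concatenation of tangent and spiral paths. The main obstacle I anticipate is the dynamic tangency step: the static intuition (``tangent or enter the disc'') has to be made rigorous under growth, and this relies crucially on $\v{i}(t)<\vr$, which keeps the boundary from catching up to a slightly perturbed chord on the relevant time scale. Once this local lemma is in hand, the rest of the decomposition is a standard shortest-path-in-free-space argument.
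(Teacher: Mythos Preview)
Your proposal is correct and follows essentially the same approach as the paper: decompose the time-minimal path into maximal free arcs and boundary arcs, use a local ball/tube shortcut to force each free arc to be a straight segment, then argue by local perturbation at the contact endpoints that the segment must be tangent to the disc (the paper does this via an explicit triangle construction around the contact point, first assuming $\v{i},\v{j}\ll \vr$ and then shrinking the neighborhood to handle the general case), and finally identify the boundary arcs as spiral paths. The one place you correctly flag as delicate---the dynamic tangency step---is exactly where the paper spends most of its effort, so your anticipated obstacle matches the actual work.
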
\begin{proof} Let $\T$ be a time-minimal path.
If there exists a collision-free straight line path from the source to the destination, then $\T$ consists of the straight line between $s$ and $t$ which is a tangent path. Without loss of generality, assume that $\T$ is a time-minimal path which is not a (single) straight line. Let $\T'$ be a sub-path of $\T$ from point $p$ to point $q$ where: (1) $p \in \partial C_i$ and $q \in \partial C_j$, where $C_i \not= C_j$, (2) $\T'$ has no contact with the boundary of the discs except at two points $p$ and $q$, (3) no disc in $\mathscr{D}$ intersects $C_i$ or $C_j$ at points $p$ and $q$ respectively. Refer to Figure \ref{fig-spi-tan} (a) for an example. In the following, we prove: [a] $\T'$ is a straight line path, [b] $\T'$ is tangent to $C_i$ and $C_j$. 

\textbf{[a]} Assume (by contradiction) that $\T'$ is not a straight line segment. Then there is a point $o \in \T'$ such that: (1) no line segment containing $o$
is contained in $\T'$, (2) $o$ lies in the
interior of free space. Since $o$ does not touch the boundaries of the obstacles, there exists a ball with a positive radius $c$ centered at $o$, which is completely contained in free space\footnote{ A ball $c$ is located in free space, if there is a collision-free straight path from its center point to every point on its boundary. }. This is illustrated in  Figure \ref{fig-spi-tan} (a). Observe that $\lambda'$ intersects the boundary of $c$ in (at least) two points, denoted by  $x$ and $y$. Now, the sub-path of $\T'$ inside $c$ (which, by the assumption, is not a straight path), can be shortened by replacing it with the straight line path from $x$ to $y$ (see Figure \ref{fig-spi-tan} (a)). This contradicts the optimality of $\T'$.

\textbf{[b]} Let $\T'$ be a straight line segment $\overline{pq}$. For simplicity, we first assume $\v{i},\v{j} \ll \vr$, which implies that discs are static with respect to robot's motion. Now, assume by contradiction that $\T'$ is not tangent to $C_j$, refer to Figure \ref{fig-spi-tan} (b). 
Let $c$ be a ball with a positive radius and centered at point $q$ which intersects no disc other than $C_j$. Note that some part of the ball intersects the interior of $C_j$ and the other part, denoted by $\hat{c}$, is inside free space. As illustrated in Figure \ref{fig-spi-tan} (b),  $\T'$ intersects $c$ at point $x$. 
Let us denote by $\overline{xu}$ and $\overline{xu'}$ the two tangent lines to disc $C_j$ from point $x$. Without loss of generality, assume that $u'$ appears before $u$ in clockwise order around the boundary.  
Thus, it is observed that $(u', q, u)$ is the clockwise order of these three points on the boundary. 
In the following, we show that $\T'$ can be shortened inside $c$. Let $z$ and $z'$ be two points on the boundary of $C_j$ such that: (1) $(u', z', q, z, u)$ is the clockwise order of the points on the boundary, (2) $z$ and $z'$ are located inside $c$. This ensures that $q$ is located inside triangle $\bigtriangleup zxz'$. Assume, w.l.og, that the destination point is located outside of $\bigtriangleup zxz'$. Thus, $\T$ intersects the boundary of $\bigtriangleup zxz'$ at a point other than $x$. Denote this intersection point by $y$ (see Figure \ref{fig-spi-tan}). The two straight line paths $\overline{xz}$ and $\overline{xz'}$ are located inside $\hat{c}$ and are therefore collision-free. Thus, independent on whether $y$ lies on $\overline{xz}$ or $\overline{xz'}$, the path $\overline{xy}$ is collision-free. Finally, the sub-path of $\T$ between $x$ and $y$, can be shortened by the collision-free straight line path $\overline{xy}$. This is a contradiction to the optimality of $\T$. Now, in case  $\v{i},\v{j} \not\ll \vr$, we set the radius of $c$ small enough such that it does not intersect any disc other than $C_j$ until the robot has exited the triangle $\bigtriangleup zxz'$. Thus, paths $\overline{xz}$ and $\overline{xz'}$ are collision-free and therefore $\overline{xy}$ is collision-free. Again, $T$ can be shortened by $\overline{xy}$ which is a contradiction.

\begin{figure}[t]
\captionsetup[subfigure]{justification=centering}
\centering
    \begin{subfigure}{.4\linewidth}
    \includegraphics[width=2.3in]{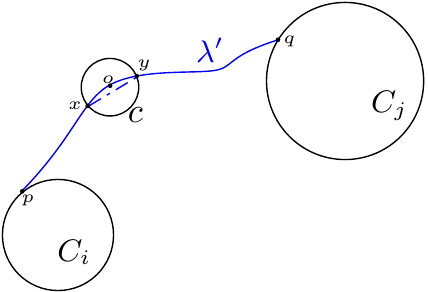}
    \centering
    \caption{}
  \end{subfigure}
    \begin{subfigure}{.4\linewidth}
    \includegraphics[width=2.3in]{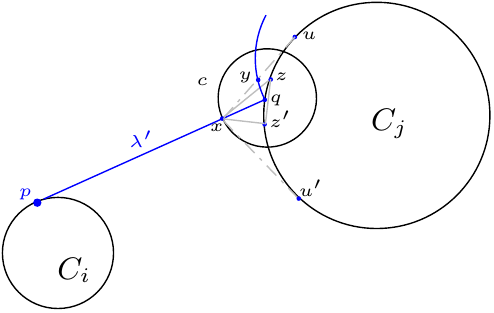}
    \caption{}
  \end{subfigure}
\caption{For the ease of demonstration, in these figures we assumed $\v{i},\v{j} \ll \vr$.}\label{fig-spi-tan}
\end{figure}

By [a] and [b], any maximal sub-path of $\T$ that is not in contact with the boundaries of the discs, is a tangent path. Let $\overline{pq}$ and $\overline{p'q'}$ be two consecutive tangent paths in $\T$. Then, we observe that the sub-path of $\T$ between $q$ and $p'$ is a path on the boundary of a disc in $\mathscr{D}$. By the definition of spiral paths, this sub-path consist of a sequence of spiral paths. Therefore, 
a time-minimal path is a sequence of tangent and/or spiral paths.
$\Box$
\end{proof}

This property of time-minimal paths allows us to construct a time-minimal path by finding a  sequence of paths in $\ee$ with minimum total length, as explained in the following. We construct a directed \textit{adjacency graph}, $G(V_s, E_s)$ as follows. With each Steiner point $v \in \steinerpoints$ we associate a unique vertex in $V_s$ and denote it by $\dot{v}$.
Then, with each path $\overrightarrow{uv} \in \ee$ we associate a unique edge $(\dot{u}, \dot{v})$ in $E_s$. 
Since each edge in $E_s$ is associated with a path in $\ee$, each path in graph $G$ is associated with a sequence of paths in $\ee$. In the following, we explain  how to construct a unique robot-path which is
 associated with a pair $(\pi, \tau)$, where $\pi$ is a path in $G$ and a $\tau$ is a departure time.

Note that for different departure times, the robot-path associated with edge $\overrightarrow{\dot{u}\dot{v}}$ may be different in shape and length.
Let $\robotpath:E_s \times T \rightarrow {\rm I\!R}^2$ be a function where $\robotpath(\overrightarrow{\dot{u}\dot{v}}, \tau)$ is the robot-path $\overrightarrow{uv}$, when the departure time is $\tau$. 
Because each robot-path is defined as a 2D curve on the plane, we can determine its length in constant time by looking up the corresponding equations in Section \ref{upperk}. For ease of notation, we use $e$ instead of $\overrightarrow{\dot{u} \dot{v}}$.
In the following, we denote the length of the path $\robotpath(e, \tau)$  by $|\robotpath(e, \tau)|$.

\begin{figure}
\centering
\includegraphics[width=5in]{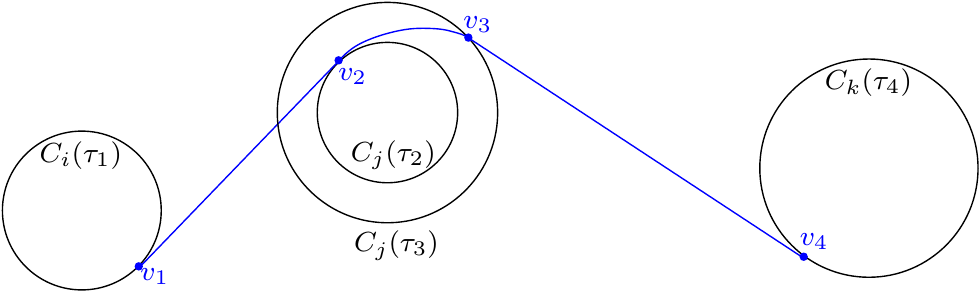}
\caption{This figure illustrates the robot-path $\robotpath(\pi(\dot{v_1}, \dot{v_4}), \tau_1) = (\robotpath((\dot{v_1}, \dot{v_2}),\tau_1), $ $\robotpath((\dot{v_2}, \dot{v_3}), \tau_2), \robotpath((\dot{v_{3}}, \dot{v_4}), \tau_3))$, where $\dot{v_1},\dot{v_2},\dot{v_3},\dot{v_4} \in V_s$ are vertices associated with Steiner points $v_1,v_2,v_3$ and $v_4$, respectively.}\label{fig-robot-path}
\end{figure}

 Let a simple path $\pi(v_1, v_h)$ between two vertices $v_1, v_h \in V_s$ be a finite connected sequence of edges in the adjacency graph, denoted by $\Big((v_1,v_2), (v_2,v_3), ..., (v_{h-1},v_h)\Big)$, such that no vertex appears more than once in the sequence. 
  Observe that, since for each pair $(e, \tau)$ there exists a unique robot-path, for each pair $(\pi, \tau)$ there is a unique robot-path as well.
See Figure \ref{fig-robot-path}. For a given path $\pi(v_1, v_h)$ in the graph and a departure time $\tau_1 \geq 0$, the associated robot-path $\robotpath(\pi(v_1, v_h), \tau_1)$ is defined as a sequence of robot-paths 
\begin{align*}\Big(\robotpath((v_1, v_2), \tau_1), \robotpath((v_2, v_3), \tau_2), ..., \robotpath((v_{h-1}, v_h), \tau_h)\Big),
\end{align*}
 where for $1 < i \leq h$ we have $\tau_i = \tau_{i-1} + {|\robotpath((v_{i-1}, v_{i}), \tau_{i-1})| \over \vr}$. 
 The length of this path is calculated as the summation of its sub-paths:
 \begin{align} 
 |\robotpath(\pi(v_1, v_h), \tau_1)| = \sum\limits_{i=1}^{h} {|\robotpath((v_{i-1}, v_i), \tau_i))|}
 \label{robot-path-size} \end{align}
Recall that an invalid robot-path is a path that intersects some disc $C_i$. In the following, we define two special cases for invalid robot-paths. 

\begin{definition}\label{validity}
An invalid robot-path $\T = \robotpath(e, \tau)$ is \textbf{blocked} if there exist a time $t \in (\tau, \tau + {|\robotpath(e, \tau)| \over \vr})$ and a disc $C_i$, such that $\T(t)$ returns a point located inside $C_i(t)$. For $t = \tau$ or $t = \tau + {|\robotpath(e, \tau)| \over \vr}$, if $\T(t)$ is inside $C_i(t)$, we say $\T$ is \textbf{dominated} by $C_i$.
\end{definition}

In Section \ref{section-the-algorithm}, we show how to determine if a query path is dominated or blocked. 
Given a point $x$ on the plane, we can determine a sequence of discs, intersecting $x$ over time. Since the discs are growing continuously, if disc $C_i$ intersects $x$ at time $t$, for any $t' > t$ we have $x \in C_i(t')$. Thus, point $x$ may not be part of a valid time-minimal path after time $t$. To identify the time interval at which we $x$ is located inside free space, we need to find the first disc in $\mathscr{D}$ which intersects $x$. To this end, we use the Voronoi diagram of the growing discs as follows. To identify if a path is dominated (refer to Lemma \ref{valid-time-d}), we first define a partitioning of the plane as follows.
\begin{definition}\label{voronoi}
The \textbf{Voronoi diagram of growing discs} is a partitioning of a plane into regions $\{\mathcal{H}_1,..., \mathcal{H}_n\}$, such that for any point $x \in \mathcal{H}_i$, $C_i$ is the first disc in $\mathscr{D}$ that intersects $x$, when the discs are growing.
\end{definition}

The Voronoi diagram of a set of growing discs is an \textit{additively multiplicatively weighted Voronoi diagram} \cite{Spatial-tessellations, yi-thesis}. We use a greedy algorithm proposed in \cite{yi-thesis} for computing this Voronoi diagram in $O(n^2 \log n)$ time. The degree of a Voronoi cell $\mathcal{H}_i$, denoted by $deg(\mathcal{H}_i)$, is the number of Voronoi edges of $\mathcal{H}_i$.

Define $\departurecurve_i = \cup_{p \in \steinerpoints_i} \{D_p\}$ as the set of departure curves originating from disc $C_i$. 
In Section \ref{section-basic-algorithm}, we compute the arrangement of these departure curves in $O(n\log n + k_i \log k_i)$, where $k_i$ is the number of intersections between the curves in $\departurecurve_i$.

\begin{definition} \label{arr-size-def}
Define the \textbf{arrangement size} as $k = \sum_{i=1}^n k_i$.
\end{definition}

The arrangement size is the total number of intersections in $n$ arrangements, corresponding to $n$ discs. In Section \ref{upperk}, we compute an upper bound on the arrangement size by deriving some properties for departure curves. 
In the following, we show an important property (Lemma \ref{bounds}) of  departure curves which will be used in Section \ref{section-basic-algorithm}.

\begin{figure}
\centering
\includegraphics[width=150pt]{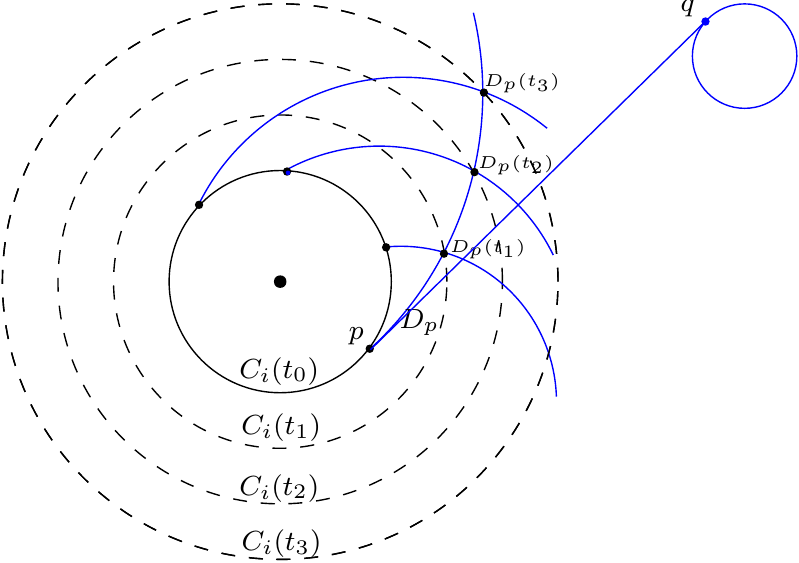}
\caption{The intersection sequence corresponding to to edge $e = (p, q)$ is $\alignset_e = \{x_{t_1},x_{t_2},x_{t_3}\}$.}\label{intersection-seq}
\end{figure}

Given a tangent path $\vec{e} = \vec{pq} \in \ee$, let $D_p \in \departurecurve_i$ be the departure curve of Steiner point $p$. Let $e \in G$ be the edge corresponding to $\vec{e}$.
We define an \textit{intersection sequence} $\alignset_e$ as a sequence of all  intersection points between $D_p$ and the departure curves in $\departurecurve_i \setminus \{D_p\}$, ordered by time of intersection. 
Refer to Figure \ref{intersection-seq} for an example. The intersection sequence in this example contains three points: $\alignset_e = \{x_{t_1},x_{t_2},x_{t_3}\}$. The three times $t_1$, $t_2$ and $t_3$ are when $D_p$ intersects other departure curves. Finally, $x_{t_1}$, $x_{t_2}$ and $x_{t_3}$ specify the locations of the intersection points at times $t_1$, $t_2$ and $t_3$, respectively.

Define a collection of all intersection sequences and call it \textit{intersection set}: $\alignset = \cup_{e \in E_s} \{\alignset_e\}$. In Section \ref{section-blocked}, we construct the intersection set by computing the arrangement of the departure curves.

\begin{observation}\label{blocked_by_c}
Given $a>0$ and $e \in E_s$, let $[a ,b]$ be a maximal time interval where for any $t \in [a, b]$, robot-path $\robotpath(e, \tau)$ is blocked by disc $C_h$. Then, $\robotpath(e, a)$ and $\robotpath(e, b)$ are tangents to $C_h$.
\end{observation}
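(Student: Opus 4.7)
The plan is to prove this by a continuity argument on the signed clearance between the tangent path and the disc $C_h$. As $\tau$ varies, both $\robotpath(e,\tau)$ and the radius $r_h(t)$ evolve smoothly---the Steiner endpoints follow departure curves and $r_h$ is a polynomial in $t$---so the minimum distance from the path to $C_h(t)$ over the traversal interval depends continuously on $\tau$. The boundary between blocked and unblocked departure times should then correspond exactly to the tangent configuration.

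The key steps I would carry out are: (i) introduce the signed clearance
\[
\phi(\tau) \;=\; \min_{t \,\in\, [\tau,\, \tau + |\robotpath(e,\tau)|/\vr]} \Bigl( \bigl\| \robotpath(e,\tau)(t) - o_h \bigr\| - r_h(t) \Bigr),
\]
and verify that $\phi$ is continuous in $\tau$ (the minimand is jointly continuous in $(\tau,t)$ and the minimization is over a compact interval whose endpoints depend continuously on $\tau$); (ii) observe from Definition \ref{validity} that $\robotpath(e,\tau)$ is blocked by $C_h$ precisely when $\phi(\tau) < 0$, and that $\phi(\tau) = 0$ with the minimum attained at an interior time $t$ forces the straight tangent segment to meet $C_h(t)$ in a single non-penetrating contact, i.e., to be tangent to $C_h$; and (iii) exploit the maximality of $[a,b]$ together with $a>0$: there exist $\tau<a$ arbitrarily close to $a$ at which the path is not blocked, so $\phi(\tau) \geq 0$ there, while on $[a,b]$ we have $\phi \leq 0$. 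Continuity then forces $\phi(a)=0$, and the same argument at $b$ yields $\phi(b)=0$, giving tangency at both endpoints.

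The main obstacle I anticipate is the mild bookkeeping around the strict interior used in the definition of ``blocked'' versus the closed endpoints of $[a,b]$: one must justify that the tangent configurations at $\tau=a$ and $\tau=b$ are properly captured by viewing $[a,b]$ as the closure of a maximal open component of blocked departure times (or, equivalently, by treating tangency as a limiting blocked configuration). Once this subtlety is dispatched, the continuity-plus-maximality argument delivers the conclusion without further computation, and nothing beyond the smoothness of the departure-curve parameterization from Section \ref{upperk} is needed.
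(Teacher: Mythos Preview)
Your proposal is correct and follows essentially the same approach as the paper: the paper's proof is the single sentence ``This is a consequence of the continuous movements of the robot and the discs,'' and your signed-clearance function $\phi(\tau)$ together with the maximality-plus-continuity argument is precisely a rigorous unpacking of that sentence. You have supplied considerably more detail than the paper does, but the underlying idea is identical.
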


\begin{proof} 
This is a consequence of the continuous movements of the robot and the discs. $\Box$
\end{proof}

\begin{lemma}\label{bounds}
For $a>0$ and $e \in E_s$, let $[a ,b]$ be a maximal time interval where for any $t \in [a, b]$, $\robotpath(e, \tau)$ is blocked by disc $C_i$. Then, we have $D_p(a), D_p(b) \in \alignset_e$.
\end{lemma}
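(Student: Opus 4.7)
My approach is to invoke Observation~\ref{blocked_by_c} at each endpoint of the maximal blocking interval and then translate the resulting tangency condition into a coincidence of two departure curves on the disc carrying $p$.

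I would first set notation: write $p \in \steinerpoints_j$ so that $D_p \in \departurecurve_j$, take $e = \vec{pq}$ with $q \in \steinerpoints_{j'}$, and let $C_h$ denote the blocking disc (which must be distinct from $C_j$ and $C_{j'}$, since the blocking is by a disc whose interior the path enters strictly between its two tangent endpoints). Applying Observation~\ref{blocked_by_c} at $t=a$, since $[a,b]$ is a maximal blocking interval, the path $\robotpath(e,a)$ must be tangent to $C_h$. But $\robotpath(e,a)$ is also tangent to $C_j$ at its departure point $D_p(a)$, because $e$ is itself a tangent edge of the adjacency graph. Hence the straight segment underlying $\robotpath(e,a)$ is a common tangent to $C_j$ and $C_h$ at time $a$.

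Next I would identify the Steiner point $p'\in \steinerpoints_j$ that labels the $(C_j, C_h)$ tangent of matching inner/outer type and orientation. By construction $D_{p'}(a)$ is the contact point on $C_j$ of that $(C_j, C_h)$ tangent at time $a$, and since its supporting line coincides with that of $\robotpath(e,a)$, we obtain $D_{p'}(a) = D_p(a)$. Because the tangent partner of $p$ lies on $C_{j'}\neq C_h$, we have $p \neq p'$, and so $D_{p'}\in \departurecurve_j\setminus\{D_p\}$. This is exactly a coincidence of $D_p$ with another curve of $\departurecurve_j$ at time $a$, which by the definition of the intersection sequence places $D_p(a) \in \alignset_e$. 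The identical argument at $t=b$ yields $D_p(b)\in \alignset_e$.

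The main obstacle I anticipate is the bookkeeping needed to exhibit the correct Steiner point $p'$ matching the common tangent at time $a$. Specifically, one must check that $C_j$ and $C_h$ remain disjoint at time $a$ (so that the four $(C_j, C_h)$ tangents, and hence four Steiner points in $\steinerpoints_j$, are genuinely well defined), and that the matching inner/outer type of the common tangent can be read off unambiguously from the local geometry. Once this bookkeeping is in place, the equality $D_{p'}(a) = D_p(a)$ is immediate from the coincidence of supporting lines and requires no further computation.
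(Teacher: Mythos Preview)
Your proposal is correct and follows essentially the same line as the paper's proof: invoke Observation~\ref{blocked_by_c} to get tangency of $\robotpath(e,a)$ to the blocking disc, then exhibit the aligned tangent from the same starting disc to the blocking disc so that the two departure curves coincide at time $a$ (and likewise at $b$), yielding $D_p(a),D_p(b)\in\alignset_e$. If anything, you are more careful than the paper about the bookkeeping (checking $p\neq p'$ and flagging the well-definedness of the $(C_j,C_h)$ tangents), points the paper leaves implicit.
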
 

\begin{proof} 
Let $\robotpath(e, a)$ be a tangent path from $C_i$ to $C_j$. By Observation \ref{bounds}, $\robotpath(e, a)$ is also tangent to a third disc $C_h$. Thus, there is a tangent path between $C_i$ and $C_h$ which is \textit{aligned} (refer to Definition \ref{aligned-paths}) with $\robotpath(e, a)$ (see Figure \ref{fig-colinear} for an example). Let this tangent path be denoted by $\robotpath(e', a)$. This means that $\robotpath(e, a)$ and $\robotpath(e', a)$ are originating from the same point. Let $e = (p, q)$ and $e' = (p', q')$. Thus, we have $D_{p}(a) = D_{p'}(a)$, which by the definition, represents an intersection point between $D_{p}$ and $D_{p'}$. So, we have $D_{p}(a) \in \alignset_e$. 
Using a similar argument, we can prove that $D_{p}(b) \in \alignset_e$. 
$\Box$
\end{proof}

\begin{figure}
\centering
\includegraphics[width=250pt]{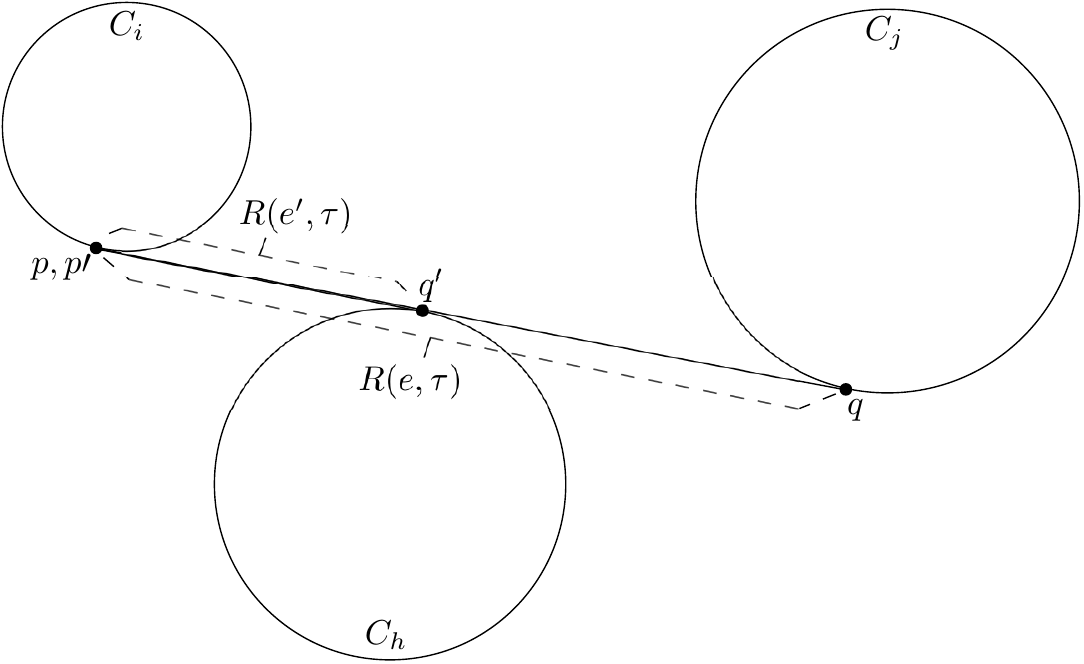}
\caption{This figure illustrates two aligned robot-paths $\robotpath(e, t)$ and $\robotpath(e', t)$ at time $t$ where $D_{\dot{p}}(t) = D_{\dot{p}'}(t)$.}\label{fig-colinear}
\end{figure}

For a given edge $e \in E_s$, a \textit{maximally blocked interval (MBI)} is a maximal time interval $I$ where for any $\tau \in I$, robot-path $\robotpath(e, \tau)$ is blocked.  Let us define a \textit{blocked sequence} $\invalidinterval_e=\{I_1, ..., I_h\}$ as all MBIs corresponding to edge $e$, sorted by their lower end points. Note that MBIs in $\invalidinterval_e$ are non-intersecting intervals. We define the \textit{blocked set} as $\invalidinterval = \cup_{e \in E_s}\{\invalidinterval_e\}$. In Section \ref{section-blocked}, we compute the blocked set as part of the preprocessing of Algorithm \ref{basic-algorithm}. In the Algorithm, we use $\invalidinterval$ to determine quickly if a query robot-path $\robotpath(e, \tau)$ is blocked in the algorithm.

\section{Upper Bound on $k$} \label{upperk}

\label{section-arrangment-size} 

In this section, we provide an upper bound on the number of intersection points between the departure curves (the arrangement size). Lemma \ref{bounds} proved that for each MBI, there are two distinct points in the intersection set associated with the two ends of the interval. In Section \ref{section-basic-algorithm}, we show the number of MBIs is not larger than the arrangement size. 
Since the validity of query robot-paths is carried out by searching on a list of MBIs (Section \ref{section-the-algorithm}), a bound on arrangement size directly impacts on the time complexity of our preprocessing and query algorithms. Thus, we are interested in finding an upper bound on $k$. Let us first derive some formulas for the departure curves. 

 The equations of the departure curves are most conveniently carried out in polar coordinates. Hence, we define a polar coordinate system as follows. A disc $C_i \in \mathscr{D}$ is denoted by a pair $(o_i, r_i(t))$, where $o_i$ is the center point and $r_i(t)$ is the radius of $C_i$ at time $t$. Locate the polar reference point at $o_i$. Now, a point $b$ in the plane is uniquely identified by a pair $(r, \theta)$, where $r$ is the Euclidean distance of point $b$ from the origin and $\theta$ is the angle between the positive horizontal axis and the line segment $\overline{o_ib}$ in counter clockwise direction. A departure curve is the location of a Steiner point over time.
We express the departure curve formula of a Steiner point $p \in \partial C_i$ by $(r_i(t), \theta_p(t))$ in polar coordinates, where $r_i(t)$ is the radius of disc $C_i$ at time $t$ and $\theta_p(t)$ is the polar angle of $p$ at time $t$. 
In the following, we derive the formula for $\theta_p(t)$.

As illustrated in Figure $\ref{fig-lr-tangent}$; let $l$ be a tangent path starting from Steiner point $p \in \partial C_i$ and arriving at Steiner point $q \in \partial C_j$, where $C_i \not= C_j$. Recall that for a given departure time $t$, $l(t)$ is a robot-path, when the robot leaves $p$ at time $t$.
Define a function $\delta_p : T \rightarrow [0, 2\pi)$, where $\delta_p(t)$ is the angle between the line segments $\overline{o_io_j}$ and $\overline{o_ip}$, at time $t$. Let $\theta_{ij}$ be the (constant) angle of segment $\overline{o_io_j}$ with respect to the horizontal line through $o_i$. Let $L$ be the line through $o_i$ and $o_j$ and directed from $o_i$ to $o_j$. Then, $L$ divides the plane into two regions: left half-plane and right half-plane.
We obtain:

\begin{align}\label{theta_delta}
        \theta_p(t) = \left\{  
          \begin{array}{ll}
                    \theta_{ij} + \delta_p(t) & \text{if $p$ is located in the left half-plane} \\
                    \theta_{ij} - \delta_p(t) & \text{if $p$ is located in the right half-plane}
                  \end{array}
                \right.
\end{align}

As in Figure \ref{fig-lr-tangent}, let $\overline{o_ic}$ be a parallel segment to $l(t)$ where $|l(t)| = |\overline{o_ic}|$. Now, consider the triangle $T = \triangle o_io_jc$. Since $l(t)$ is tangent to disc $C_j$, it is observed that $\overline{o_jc}$ is perpendicular to the segments $\overline{o_ic}$ and $l(t)$. Therefore, $\measuredangle c$ is a right angle and $T$ is a right triangle consequently. Now, we obtain the following equation:

\begin{align}
\label{delta}
&sin(\delta_p(t)) = {|l(t)| \over |\overline{o_io_j}|} \nonumber\\ 
&\delta_p(t) = arcsin({|l(t)| \over |\overline{o_io_j}|}) 
\end{align}

\begin{figure}[t]
\centering
\captionsetup[subfigure]{justification=centering}
    \begin{subfigure}{.5\linewidth}
    \includegraphics[width=2.5in]{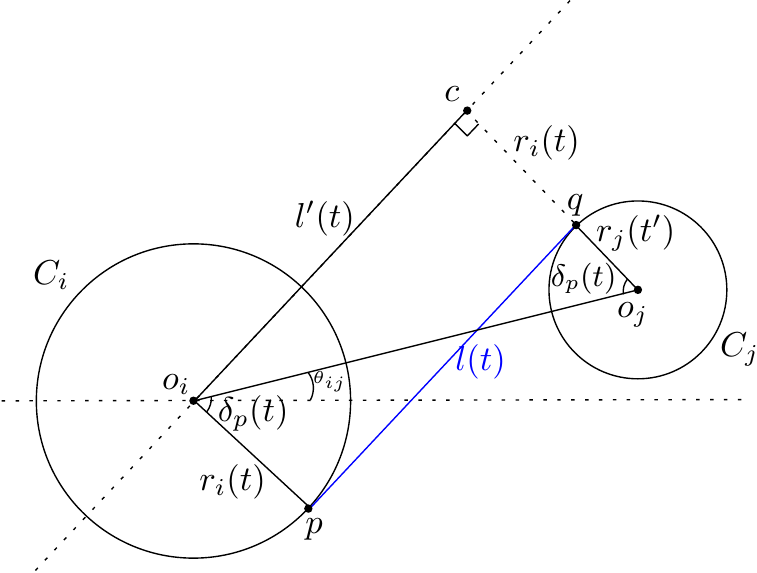}
    \caption{}
  \end{subfigure}
    \begin{subfigure}{.5\linewidth}
    \includegraphics[width=3in]{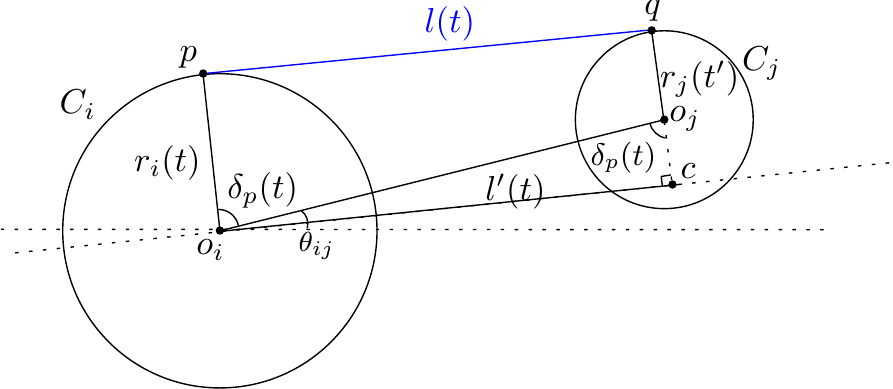}
    \caption{}
  \end{subfigure}

\caption{ (a) The inner tangent is denoted by $l(t)$, where we have $|\overline{o_jc}| = r_i(t) + r_j(t')$. (b)  The outer tangent is denoted by $l(t)$, where  $|\overline{o_jc}| = |r_i(t) - r_j(t')|$. 
In both (a) and (b) $\triangle o_ico_j$ is a right triangle.}\label{fig-lr-tangent}
\end{figure}

Then, using Pythagorean theorem we have: $|l(t)| = \sqrt{|\overline{o_io_j}|^2 - |\overline{o_jc}|^2}$. Note that when $l(t)$ is an inner tangent, we have $|\overline{o_jc}| = r_i(t) + r_j(t')$ and when $l(t)$ is an outer tangent, we have $|\overline{o_jc}| = |r_i(t) - r_j(t')|$, where $t' =  t+{|l(t)|\over \vr}$ (see Figure \ref{fig-lr-tangent}).
Therefore: 
\begin{align}\label{ell1}
        |l(t)| = \left\{  
          \begin{array}{ll}
                    \sqrt{|o_io_j|^2 - \Big(r_i(t)+r_j(t+{|l(t)|\over \vr})\Big)^2} & \qquad \text{if $\overline{pq}$ is an inner tangent},\\
                    \sqrt{|o_io_j|^2 - \Big(r_i(t)-r_j(t+{|l(t)|\over \vr})\Big)^2} & \qquad \text{if $\overline{pq}$ is an outer tangent}
                  \end{array}
                \right.
\end{align}

In the above equation, we replace $r_i(t)$ and $r_j(t')$ with $V_i(t)t + R_i$ and $V_j(t') t' + R_j$ respectively, where $t' = t+{|l(t)|\over \vr}$. Then, we isolate $|l(t)|$ as follows\footnote{In order to simplify Equation \ref{ell2}, we replaced $V_i(t)$, $V_j(t)$, $\vr$ and $|o_io_j|$ with $V_i$, $V_j$, $V_r$, and $L$, respectively.}.

{ 

\begin{align}\label{ell2}
 |l(t)| =  \left\{  
          \begin{array}{ll}
   {V_r V_j
   ( V_jt + R_j + V_it + R_i) + \sqrt{V_r^2\Big(- L^2V_j^2 + V_r^2 \Big(L^2  + (V_it + R_i + V_jt  + R_j)^2 \Big)\Big)} \over (V_r^2 - V_j^2)}  &\\
   \qquad\qquad\qquad\qquad\qquad\qquad\qquad\qquad\qquad\qquad\quad
    \text{if $\overline{pq}$ is an inner tangent,} &
   \\ & \\  &\\ &\\
   {V_r V_j
   ( V_jt + R_j - V_it - R_i)+ \sqrt{V_r^2\Big(- L^2V_j^2 + V_r^2 \Big(L^2  + (V_it  + R_i - V_jt - R_j)^2 \Big)\Big)}\over (V_r^2 - V_j^2)}
    &\\
   \qquad\qquad\qquad\qquad\qquad\qquad\qquad\qquad\qquad\qquad\quad
    \text{if $\overline{pq}$ is an outer tangent,} 
                    \end{array}
                  \right.
\end{align}

 Since $V_i(t)$ and $V_j(t)$ are polynomial functions of degree $\dd$, we can observe the following property in the above formula

\begin{property}\label{o(d)}
The equation for $|l(t)|$ can be expressed as a polynomial division ${f(t) + \sqrt{h(t)} \over g(t)}$ in which we can derive the following properties:

 \begin{itemize}
 \item(P1) $f(t)$ is a polynomials of degree $2\dd + 1$.
 \item(P1) $g(t)$ is a polynomials of degree $2\dd$.
 \item(P2) $h(t)$ is a polynomial of degree $4\dd + 2$.
 \item(P3) $g(t) > 0$.
 \end{itemize}
\end{property}

Now that we established the formula for $|l(t)|$, we use Equation \ref{delta} to explain the case when two departure curves intersect. 
Let $D_p$ and $D_{p'}$ be two departure curves where $l_{ij} = pq$ and $l_{ik} = p'q'$. By the definition, $D_p$ intersects $D_{p'}$ at time $t$, only if $D_p(t) = D_{p'}(t)$, and consequently $\theta_p(t) = \theta_{p'}(t)$. By Equations \ref{delta} and \ref{theta_delta}, we have:
\begin{align}
\label{inter-eq}
arcsin({|l_{ij}(t)| \over |\overline{o_io_j}|}) + \theta_{ij} = arcsin({|l_{ik}(t)| \over |\overline{o_io_k}|}) + \theta_{ik}
\end{align}
Note that $|\overline{o_io_j}|$, $|\overline{o_io_k}|$, $\theta_{ij}$ and $\theta_{ik}$ are constant values. To simplify the above equation, we assume $\theta_{ij} = \theta_{ik}$ and $|\overline{o_io_j}| = |\overline{o_io_k}|$. Since $arcsin$ is strictly increasing function, we have the following.
\begin{align*}
{|l_{ij}(t)|} = {|l_{ik}(t)|}
\end{align*}
By Property \ref{o(d)}, both $|l_{ij}(t)|$ and $|l_{ik}(t)|$ are divisions of the form ${f(t) + \sqrt{h(t)} \over g(t)}$ where $f(t)$, $g(t)$ and $h(t)$ are polynomials. If $|l_{ij}(t)| = {f_1(t) + \sqrt{h_1(t)} \over g_1(t)}$ and $|l_{ik}(t)| = {f_2(t) + \sqrt{h_2(t)} \over g_2(t)}$, then
\begin{align*}
{f_1(t)g_2(t) + \sqrt{h_1(t)}g_2(t) - f_2(t)g_1(t) - \sqrt{h_2(t)}g_1(t)} = 0
\end{align*}
where  $g_2(t),g_2(t) > 0$. We eliminate the radicals of the formula by squaring the equation twice, which results in the following:
\begin{align*}
\Bigg({{{\Big(f_1(t)g_2(t)  - f_2(t)g_1(t)\Big)^2} - g_1^2(t)h_2(t)- g_2^2(t)h_1(t)} \over 2g_1(t)g_2(t)}\Bigg)^2 - h_1(t)h_2(t) = 0
\end{align*}
 Observe that this is a polynomial of degree $16\dd + 8$. So, there are at most $16\dd + 8$ distinct values for $t$ such that ${|l_{ij}(t)|} = {|l_{ik}(t)|}$. This implies that the two departure curves $D_p(t)$ and $D_{p'}(t)$ may intersect $O(\dd)$ times. Since 
there are $n$ departure curves in $\departurecurve_i$, the number of intersections in the arrangements of $\departurecurve_i$ is of order $O(n^2 \dd)$. Thus, the total number of intersection points in all arrangements (i.e., the arrangement size) is $O(n^3 \dd)$. 

\begin{corollary}\label{maxip}
The arrangement size is upper bounded by $O(n^3 \dd)$.
\end{corollary}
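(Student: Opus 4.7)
The plan is to reduce the global bound to a pairwise one. By Definition \ref{arr-size-def}, $k = \sum_{i=1}^n k_i$, where $k_i$ counts intersections among the departure curves in $\departurecurve_i$. Since $|\departurecurve_i| = O(n)$, once I show that any two curves in $\departurecurve_i$ cross $O(\dd)$ times, I obtain $k_i = O(n^2 \dd)$ and therefore $k = O(n^3 \dd)$. So all the work sits in the pairwise bound.

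First I would fix two curves $D_p, D_{p'} \in \departurecurve_i$, corresponding to tangent paths $l_{ij}$ (to $C_j$) and $l_{ik}$ (to $C_k$). Combining the polar expression for the angular coordinate (Equation \ref{theta_delta}) with the closed form for $\delta_p(t)$ (Equation \ref{delta}), the condition $D_p(t) = D_{p'}(t)$ collapses to Equation \ref{inter-eq}. By Property \ref{o(d)}, each $|l(t)|$ is of the form $(f(t)+\sqrt{h(t)})/g(t)$ with $\deg f = 2\dd+1$, $\deg g = 2\dd$, $\deg h = 4\dd+2$, and $g(t)>0$. Under the simplifying assumption $\theta_{ij}=\theta_{ik}$ and $|\overline{o_io_j}|=|\overline{o_io_k}|$, monotonicity of $\arcsin$ reduces the intersection condition to $|l_{ij}(t)| = |l_{ik}(t)|$, i.e.\ an identity between two expressions of this same algebraic form.

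Next I would eliminate the two nested radicals by squaring the difference twice, clearing denominators using the positivity of the $g$'s. Each squaring at most doubles the polynomial degree, and the text's accounting produces a polynomial of degree $16\dd + 8$ in $t$. Hence the intersection equation has $O(\dd)$ real roots, giving $O(\dd)$ intersections per pair of curves in $\departurecurve_i$. Summing over the $O(n^2)$ pairs inside each $\departurecurve_i$ and over the $n$ sets yields the claimed $O(n^3\dd)$ bound.

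The main obstacle I anticipate is justifying that the simplifying assumption $\theta_{ij} = \theta_{ik}$ and $|\overline{o_io_j}| = |\overline{o_io_k}|$ does not lose generality. In the general case the equation reads $\arcsin\!\bigl(|l_{ij}(t)|/|\overline{o_io_j}|\bigr) + \theta_{ij} = \arcsin\!\bigl(|l_{ik}(t)|/|\overline{o_io_k}|\bigr) + \theta_{ik}$. Applying $\sin$ to both sides and using the sine addition formula yields a relation among $|l_{ij}(t)|$, $|l_{ik}(t)|$, and $\sqrt{|\overline{o_io_j}|^2 - |l_{ij}(t)|^2}$, $\sqrt{|\overline{o_io_k}|^2 - |l_{ik}(t)|^2}$ (the cosines), whose only effect is to introduce two additional radicals of degree $O(\dd)$. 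Clearing all four radicals by a constant number of squarings still produces a polynomial whose degree is $O(\dd)$, because each squaring multiplies the degree by at most $2$ and only a constant number of squarings are required. Thus the pairwise bound of $O(\dd)$ intersections survives in full generality, and Corollary \ref{maxip} follows.
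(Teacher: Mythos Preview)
Your proposal is correct and follows essentially the same route as the paper: reduce to a pairwise intersection count, use Equations~\ref{theta_delta}--\ref{inter-eq} together with Property~\ref{o(d)}, square out the radicals to obtain a degree-$(16\dd+8)$ polynomial, and then sum over $O(n^2)$ pairs in each of the $n$ sets $\departurecurve_i$. In fact you go a step further than the paper by sketching why the simplifying assumption $\theta_{ij}=\theta_{ik}$, $|\overline{o_io_j}|=|\overline{o_io_k}|$ is harmless for the asymptotic bound; the paper simply adopts that assumption without discussing the general case.
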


\section{Time-Minimal Path Algorithm} \label{section-basic-algorithm}

In this section, we present the algorithm to answer time-minimal path queries among growing discs. We first discuss the preprocessing algorithm which runs in $O(n^2 \log n + k \log k)$ time, where $n$ is the number of discs in $\mathscr{D}$ and $k$ is the arrangement size.
For a pair of  query points $s$ and $d$, our algorithm  computes a time-minimal path from $s$ to $d$ among the growing discs in $O(n^2 \log (kn))$ time.
The query time could substantially reduce the running time since $k \in O(n^3\dd)$. We achieve our result by constructing data structures, so that we determine if query path is valid efficiently. 
 We
establish this claim in Lemmas \ref{valid-time-b} and \ref{valid-time-d}. 

\subsection{Preprocessing} \label{section-blocked}
In this section, we preprocess the departure curves using a data structure, designed such that given an edge $e \in E_s$ and its associated query robot-path $\T$, we can quickly determine if $\T$ is blocked. In Section \ref{section-preliminary}, we called this data structure the blocked set, which is the collection of all blocked sequences.

 We construct the blocked set $\invalidinterval$ in two steps. First, we construct the intersection set $\alignset$, by computing the arrangement of departure curves. Second, we run Algorithm \ref{preprocess}, whose inputs are the adjacency graph $G$ and the intersection set $\alignset$, and its output is $\invalidinterval$. 

To compute the arrangement of departure curves, we employ the Bentley-Ottmann sweepline algorithm \cite{Bentley-Ottmann}. The main idea of their algorithm is to move a vertical sweep-line from left to right across the plane, intersecting the input objects sequentially. The input objects of this algorithm must satisfy the following three properties: 
\begin{itemize}
\item 
(P1) Any vertical line intersects
each object exactly once.
\item (P2) For any pair of objects intersecting the same
vertical line we can determine which is above the other one at constant cost.
\item 
(P3) Given two objects, it is possible to compute their
leftmost intersection point (if they have any), after some fixed vertical line.
\end{itemize} 
In the following lemma we propose a modified version of the Bentley-Ottmann algorithm to compute the arrangement of departure curves of disc $C_i$. This is an important part the preprocessing computations (Line 3 of Algorithm \ref{pre-algorithm}). The idea is to use a circle sweep instead of the vertical line-sweep. A \textit{circle sweep} $\zeta$ is defined as a circle which sweeps the plane by continuously growing  starting from $C_i$'s center point (see Figure \ref{fig-departure-arrangement} for an example).

\begin{figure}
\centering
\includegraphics[width=200pt]{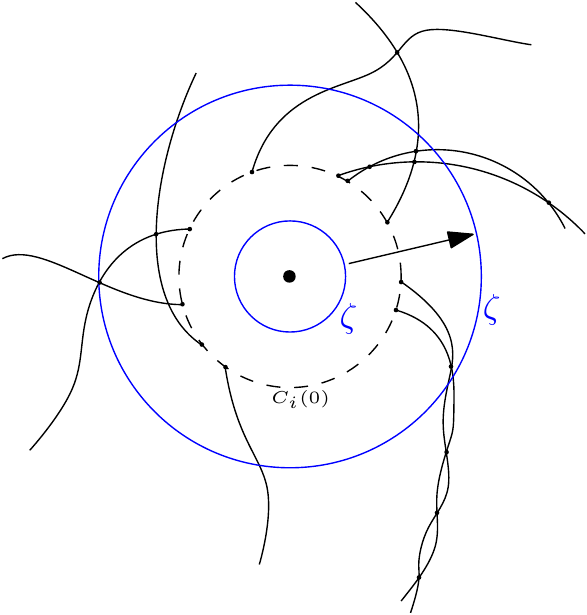}
\caption{Circle sweep $\zeta$ grows outwards from the center point of disc $C_i$ and sweeps the departure curves.}\label{fig-departure-arrangement}
\end{figure}

\begin{lemma}\label{arrangment}
Intersection set $\alignset$ can be computed in $O(n^2 \log n + k \log k)$ time, where $k$ is the arrangement size  (the total number of intersection points in all arrangements).
\end{lemma}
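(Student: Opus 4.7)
The plan is to process each disc independently: for each $C_i$, compute the arrangement of the $O(n)$ departure curves in $\departurecurve_i$ using a tailored sweepline algorithm, and then sum the per-disc costs. Since each of the $n-1$ other discs contributes four tangents to $C_i$, we have $|\departurecurve_i| = O(n)$, and the arrangement of $\departurecurve_i$ has $k_i$ interior intersections with $k = \sum_{i=1}^{n} k_i$ by Definition \ref{arr-size-def}.

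The core idea is to replace the vertical line-sweep of Bentley--Ottmann by a \emph{circle sweep} $\zeta$ centered at $o_i$ whose radius grows with $t$, as in Figure \ref{fig-departure-arrangement}, and then to verify properties (P1)--(P3) required by the algorithm. For (P1), a Steiner point $p \in \partial C_i$ has radial coordinate $r_i(t)$ in the polar frame at $o_i$; since $r_i$ is monotone in $t$, every growing circle $\zeta$ meets each $D_p$ in exactly one point. For (P2), at a given circle radius the active curves can be ordered by the polar angle $\theta_p(t)$ given by Equations (\ref{theta_delta}) and (\ref{delta}), which is evaluable in constant time. For (P3), the earliest intersection of $D_p$ and $D_{p'}$ after the current sweep time $t_0$ is the smallest root greater than $t_0$ of the polynomial derived in Section \ref{upperk} by equating $|l_{ij}(t)|$ with $|l_{ik}(t)|$; by Property \ref{o(d)} this polynomial has degree $O(\dd)$, so its $O(\dd)$ real roots can be isolated in $O(1)$ time (treating $\dd$ as a fixed parameter of the input).

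Once (P1)--(P3) are in place, a standard Bentley--Ottmann run on $\departurecurve_i$ maintains an event queue of size $O(n + k_i)$ and a sweep-status structure of size $O(n)$, producing all $k_i$ intersections in $O\bigl((n + k_i)\log(n + k_i)\bigr)$ time. Summing over the $n$ discs gives
\begin{align*}
\sum_{i=1}^{n} O\bigl((n + k_i)\log(n + k_i)\bigr) \;=\; O\bigl((n^2 + k)\log(n + k)\bigr) \;=\; O(n^2 \log n + k \log k),
\end{align*}
which is the claimed bound. The collected intersection sequences $\alignset_e$ are then obtained per edge by grouping the reported intersection points incident to each departure curve $D_p$ and sorting them by time; this fits within the same asymptotic budget.

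The main obstacle is establishing (P1) cleanly and transferring the Bentley--Ottmann invariants from the ``above/below'' order on a vertical line to the cyclic ``polar angle'' order on a circle. The monotonicity of $r_i(t)$ ensures that the circle sweeps each curve exactly once, so the active curves at every time form a cyclic sequence in which adjacencies change only at curve starts, curve ends, or intersection events, exactly as in the linear setting; a small amount of care is needed at the angular wrap-around $\theta = 2\pi$, but this is handled by a constant-size case analysis. The constant-time cost of (P3) likewise requires only that $\dd$ be treated as a fixed parameter, which is the standing assumption throughout the paper.
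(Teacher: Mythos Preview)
Your proposal is correct and follows essentially the same approach as the paper: a per-disc circle sweep centered at $o_i$ in place of the Bentley--Ottmann vertical line, verification of (P1)--(P3) via the monotonicity of $r_i(t)$ and the polar-angle formula, and summation of the $O((n+k_i)\log(n+k_i))$ per-disc costs to obtain $O(n^2\log n + k\log k)$. The only cosmetic differences are that the paper argues (P1) by contradiction and absorbs the root-finding cost of (P3) into a $k_i\log k_i$ sorting term rather than appealing to $\dd$ being fixed, and it does not explicitly discuss the angular wrap-around that you mention.
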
 

\begin{proof} 
Let $\mathcal{D}_i$ be a set of departure curves originating from disc $C_i$. We first compute the arrangement of departure curves in $\mathcal{D}_i$. In order to achieve this, we adopt the Bentley-Ottmann paradigm. We prove that the departure curves meet the three properties of input objects in Bentley-Ottmann algorithm, when we use a circle sweep instead of a vertical sweep:
\begin{itemize}
\item (P1) A fixed circle sweep intersects each departure curve in $\mathcal{D}_i$ exactly once.
\item (P2) Let $D_p, D_q \in \mathcal{D}_i$ be two departure curves intersecting a fixed circle sweep $\overline{\zeta}$. Let $D_p$ and $D_q$ intersect $\overline{\zeta}$ at points  $\overline{p}$ and $\overline{q}$, respectively. Then, we can determine the clock-wise order in which $\overline{p}$ and $\overline{q}$ appear on the boundary of the circle sweep.
\item (P3) Given two departure curves in $\mathcal{D}_i$, it is possible to compute their
closest intersection point (if they have any) to some fixed circle sweep.
\end{itemize}

We prove Property 1 by contradiction. Assume a fixed circle sweep intersects a departure curve $D_p$ at two points $D_p(t_1)$ and $D_p(t_2)$ where $t_1 < t_2$. Observe that  $D_p(t_1)$ and  $D_p(t_2)$ lie at an equal distance from the center point. Thus, the radius of $C_i$ is equal at two times $t_1$ and $t_2$: $r_i(t_1) = r_i(t_2)$. This is a contradiction since we assumed that the discs are growing constantly.

Property 2 is established straightforwardly by applying Equation \ref{theta_delta}, in which, for two given departure curves $D_p, D_q \in \mathcal{D}_i$ and a time $t$, the polar angles $\theta_p(t)$ and $\theta_q(t)$ are provided. Thus, the clockwise order of the intersection points on the boundary of $\zeta$ can be found using the polar angles.

In order to prove Property 3, we first use Equation \ref{inter-eq} to compute the intersection point(s) of two given departure curves $D_p, D_q \in \mathcal{D}_i$. Then, we sort them based on their distances to $C_i$'s center in $O(k' \log k')$, where $k'$ is the number of intersections. Since we repeat this procedure for every pair of intersecting departure curves, the time complexity is $O(k_i \log k_i)$, with $k_i$ being the number of intersections among the departure curves in $\mathcal{D}_i$.

Following the analysis of the Bentley-Ottmann plane sweep paradigm \cite{Bentley-Ottmann}, our algorithm runs in $O(n \log n + k_i \log k_i)$ time. We run this algorithm once for each disc to compute $n$ arrangements corresponding to $n$ discs. Thus, the total running time of computing all arrangements is $O(n^2 \log n + k \log k)$, where $k = \sum_{i=1}^n k_i$. After finding the arrangements, sorting the intersecting points corresponding to each departure curve is straightforward. Thus, the intersection set $\alignset$ can be computed in $O(n^2 \log n + k \log k)$ time.
$\Box$
\end{proof}

Recall that each blocked sequence $\invalidinterval_e \in \invalidinterval$, contains all the maximal intervals $[a, b]$ such that for any $\tau \in [a, b]$, robot-path $\robotpath(e, \tau)$ is blocked. Now, simply observe that, by performing a binary search for $\tau$ in $\invalidinterval_e$, we can determine in logarithmic time if $\robotpath(e, \tau)$ is blocked. Next, we present Algorithm \ref{preprocess}, which computes the blocked set $\invalidinterval$.
 
 Algorithm \ref{preprocess} starts by creating a list of discs, denoted by $\mathcal{C}$, which intersects path $\robotpath(e, 0)$. Then, we update this list throughout the algorithm, so that, for a given departure time $\overline{t}$ (in line \ref{popx}), set $\mathcal{C}$ contains all discs intersecting $\robotpath(e, \overline{t})$. Trivially, if $\mathcal{C}$ is not empty at time $\overline{t}$, then $\robotpath(e, \overline{t})$ is blocked, and vice-versa. This algorithm is explained in more detail in Lemma \ref{ppp}.

\renewcommand{\baselinestretch}{1.3}

\begin{algorithm}[H]
\caption{$BlockedSet(G, \alignset)$}
\label{preprocess}
\begin{description}
\item {\bf Input}: The adjacency graph $G(V_s, E_s)$ and intersection sets $\alignset = \cup_{e \in \steinerpoints}\{\alignset_p\}$.
\item {\bf Output}: Blocked set $\invalidinterval = \cup_{e \in E_s} \{\invalidinterval_e\}$
\end{description}

\begin{algorithmic}[1] 
\STATE initialize $\invalidinterval = \{\emptyset\}$
 \FOR { each edge $e=(\dot{p}, \dot{q})  \in E_s$}\label{for} \par
 \COMMENT{ $\dot{p}, \dot{q} \in V_s$ are vertices associated to Steiner points $p$ and $q$. }
  \STATE initialize $\invalidinterval_e = \{\emptyset\}$, $low = 0$ and $high = 0$
  \STATE \label{initD} let $\mathcal{C}$ be a list of discs that intersect  $\robotpath(e, 0)$ \par
\COMMENT{ $\robotpath(e, 0)$ is the robot-path of $\vec{e}$, when the departure time is 0.}
  \WHILE {$j \leq h$}\label{while} \COMMENT{ $\alignset_e = \{D_p(t_1), ..., D_p(t_h)\}$ is the intersection sequence of $p$.}
    \STATE \label{popx} pop point $D_p(t_j)$ from $\alignset_e$ \COMMENT{Note that $t_j < t_{j+1}$} 
    \STATE let $D_p(t_j)$ be an intersection point between $D_{p}$ and $D_{p'}$
    \STATE let $e'=(\dot{p}', \dot{q}')$ 
 \COMMENT{ $\dot{p}', \dot{q}' \in V_s$ are vertices associated to Steiner points $p'$ and $q'$. } 


    \IF {$|\robotpath(e', \overline{t})| < |\robotpath(e, \overline{t})|$} \label{robotif}
 \COMMENT{ $\robotpath(e, \overline{t})$ and $\robotpath(e', \overline{t})$ are matching paths. }
      \STATE \label{cu} let $q' \in \partial C_u$ 
      \COMMENT{ If $C_u \in \mathcal{C}$, then for some $\epsilon > 0$, $\robotpath(e, \overline{t} - \epsilon )$ intersects $C_u$. }
      \IF {$C_u \in \mathcal{C}$} 
        \STATE \label{removec}  remove $C_u$ from $\mathcal{C}$
        \STATE \label{high} $high = \overline{t}$
        \IF {$\mathcal{C}$ is empty} 
        \par \COMMENT{ If $\mathcal{C}$ is empty, then for some $\epsilon > 0$, $\robotpath(e, \overline{t} + \epsilon )$ is not blocked.}      
          \STATE \label{lowup} insert $[low, high]$ to $\invalidinterval_e$
        \ENDIF
      \ELSE   
      
        \IF {$\mathcal{C}$ is empty} 
        \par \COMMENT{ If $\mathcal{C}$ is empty, then for some $\epsilon > 0$, $\robotpath(e, \overline{t} - \epsilon )$ was not blocked.}       
          \STATE \label{low} $low = \overline{t}$ \label{setlow}
        \ENDIF

        \STATE \label{end-while} add $C_u$ to $\mathcal{C}$ \label{add_i}
      \ENDIF
    \ENDIF 
  \ENDWHILE
  \STATE \label{addc} add $\invalidinterval_e$ to $\invalidinterval$.
 \ENDFOR
\STATE \bf{return} $\invalidinterval$
\end{algorithmic}
\end{algorithm}

\begin{lemma}\label{ppp}
Algorithm \ref{preprocess} computes the blocked set correctly.
\end{lemma}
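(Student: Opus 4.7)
The plan is to prove correctness by establishing an invariant on the auxiliary list $\mathcal{C}$: immediately after processing the intersection event at time $\overline{t}$ for edge $e$, the list $\mathcal{C}$ equals exactly the set of discs whose interior intersects $\robotpath(e, \tau)$ for $\tau$ slightly larger than $\overline{t}$. Once this invariant is verified, correctness of the recorded intervals is immediate: $low$ is assigned exactly when $\mathcal{C}$ transitions from empty to non-empty and $high$ exactly when it transitions back to empty, so the intervals $[low, high]$ inserted into $\invalidinterval_e$ are precisely the maximally blocked time intervals for $\robotpath(e, \tau)$.

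First I would show that between two consecutive events of $\alignset_e$ the set of discs blocking $\robotpath(e, \tau)$ does not change. By Observation~\ref{blocked_by_c}, a disc $C_u$ can only start or stop blocking $\robotpath(e, \tau)$ at a time at which the path becomes tangent to $C_u$. If such a tangency occurred strictly between two consecutive times $t_j$ and $t_{j+1}$ of $\alignset_e$, then at that instant a third disc $C_u$ would be tangent to $\robotpath(e, \tau)$ at an interior point $q'$; following the construction used in the proof of Lemma~\ref{bounds}, this forces a second departure curve $D_{p'} \in \departurecurve_i$ to coincide with $D_p$ at that time, producing a point of $\alignset_e$ strictly between $t_j$ and $t_{j+1}$, a contradiction. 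Hence $\mathcal{C}$ only needs to be updated at the events in $\alignset_e$.

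Next I would analyse each intersection event. At $D_p(t_j) = D_{p'}(t_j)$, the robot-paths $\robotpath(e, \overline{t})$ and $\robotpath(e', \overline{t})$ share their starting point on $\partial C_i$ and move in the same direction, so one is a prefix of the other; the condition $|\robotpath(e', \overline{t})| < |\robotpath(e, \overline{t})|$ isolates exactly those events at which the endpoint $q' \in \partial C_u$ of the shorter path lies in the interior of $\robotpath(e, \overline{t})$, making $C_u$ tangent to $\robotpath(e, \overline{t})$ at an interior point. This is the only mechanism by which a disc can begin or cease to meet the interior of $\robotpath(e, \tau)$. Verifying the two branches of the algorithm then reduces to a geometric check on which side of tangency the disc sits: if $C_u \in \mathcal{C}$ immediately before $\overline{t}$ the tangency marks the moment $C_u$ is about to stop intersecting the path, so the removal on line~\ref{removec} is correct; otherwise $C_u$ is about to start intersecting the path, so the insertion on line~\ref{add_i} is correct.

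Combined with the correct initialization of $\mathcal{C}$ in line~\ref{initD} from the discs that intersect $\robotpath(e, 0)$, the invariant persists throughout the sweep and each MBI is recorded exactly once. The main obstacle I anticipate is the geometric argument in the second paragraph, namely rigorously turning Observation~\ref{blocked_by_c} into the statement that every disc-tangency transition of $\robotpath(e, \tau)$ is already an event of $\alignset_e$, together with handling degenerate situations in which several tangency transitions occur at the same time $t_j$; these cases would require either a general-position assumption on the input discs or a small adjustment to the loop so that all coincident events are processed before the emptiness tests on $\mathcal{C}$ are evaluated.
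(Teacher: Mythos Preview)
Your proposal is correct and follows essentially the same approach as the paper: both establish a loop invariant on $\mathcal{C}$ (the paper phrases it as ``at the beginning of each iteration, $\mathcal{C}$ contains all discs that intersect $\robotpath(e,\overline{t})$''), maintain it using Observation~\ref{blocked_by_c} together with the Lemma~\ref{bounds} argument that every tangency transition shows up in $\alignset_e$, and then read off the MBI endpoints from the empty/non-empty transitions of $\mathcal{C}$. Your treatment of the length-comparison test on line~\ref{robotif} and of degenerate coincident events is in fact more explicit than the paper's own proof, which glosses over both points.
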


\begin{proof} 
Before the main loop (lines \ref{while}-\ref{end-while}) of the algorithm, in line \ref{initD}, $\mathcal{C}$ is defined as a list of discs that intersect $\robotpath(e, \tau = 0)$. Since $\robotpath(e, \tau > 0)$ may intersect a different set of discs than $\robotpath(e, \tau = 0)$, the algorithm updates $\mathcal{C}$ in each iteration accordingly. So, we first prove the following loop invariant for the main loop:

\begin{quotation}\noindent
\textit{At the beginning of each iteration, set $\mathcal{C}$ contains all discs that intersect the robot-path $\robotpath(e, \overline{t})$.}
\end{quotation}

The invariant holds the first time line \ref{while} has already been executed, since at time $t=0$, set $\mathcal{C}$ is already calculated (in line \ref{initD}) and contains all discs that intersect $\robotpath(e, \tau = 0)$. Now, assume the invariant holds for the $j^{th}$ iteration. By Observation \ref{blocked_by_c}, for a maximal time interval $[a, b]$ where $\robotpath(e, \tau \in [a, b])$ intersects a disc $C_i$, $\robotpath(e, a)$ and $\robotpath(e, b)$ are tangents to disc $C_i$. 
Thus, to update set $\mathcal{C}$, it suffices to look at the events when $\robotpath(e, \tau)$ is tangent to a disc in $\mathscr{D}$. 
In line \ref{cu}, the disc which is tangent to path $\robotpath(e, \tau)$ is identified. Next, if $C_u \in \mathcal{C}$, in line \ref{removec} $C_u$ is removed from $\mathcal{C}$.  If $C_u \not\in \mathcal{C}$, then $C_u$ is added  to  $\mathcal{C}$ in line \ref{addc}. Thus, at the beginning of the ${(j+1)}^{th}$ iteration $\mathcal{C}$ contains the discs in $\mathscr{D}$ that intersect $\robotpath(e, \tau = \overline{t})$.


Next, let  $[a, b]$ be a maximally blocked interval corresponding to edge $e$. By Lemma \ref{bounds}, there exist $D_p(t_{m_1}=a), D_p(t_{m_2}=b) \in \alignset_e$. In the following, we show that interval $[low, high]$ is added to $\invalidinterval_e$ when $low = a$ and $high= b$.

Assume $a > 0$. Because $[a, b]$ is a maximal blocked interval, there exist an interval $[a - \epsilon, a)$, for some $\epsilon > 0$, such that for any $\overline{\tau} \in [a - \epsilon, a)$, robot-path $\robotpath(e, \overline{\tau})$ is not blocked. Now, consider the $m_1^{th}$ iteration, when point $D_p(t_{m_1}=a)$ is removed from the list $\alignset_e$ (line \ref{popx}). At the beginning of this iteration, because $[a - \epsilon, a)$ is a non-blocked interval, by the loop invariant,  $\mathcal{C}$ is empty. 
Thus, line \ref{setlow} is executed and we have $low = a$.
In case $a = 0$, $\robotpath(e, 0)$ intersects some disc(s) and $\mathcal{C}$ is not empty at the beginning of the iterations. Thus, we have $low = a = 0$ in any loop before the $m_1^{th}$ iteration.

On the other hand, since $[a, b]$ is a maximally blocked interval, there exist an $\epsilon' > 0$ such that: (a) for any $\overline{\tau} \in [b - \epsilon', b]$, robot-path $\robotpath(e, \overline{\tau})$ intersects some disc $C_u$, (b) for any $\overline{\tau} \in (b, b + \epsilon']$, robot-path $\robotpath(e, \overline{\tau})$ is not blocked. First, since $C_u$ is tangent to robot-path $\robotpath(e, b)$, it is easily seen that in line \ref{robotif} we have $|\robotpath(e', b)| < |\robotpath(e,  b)|$. Now consider the ${m_2}^{th}$ iteration when $D_p(t_{m'} = b)$ is popped out of $\alignset_e$ (line \ref{popx}). By statement (a), $C_u \in \mathcal{C}$ and by (b), $\mathcal{C}$ becomes empty after removing $C_u$ (line \ref{removec}). So, line \label{high} is executed and $high = b$.

Since $[a, b]$ is a blocked interval, $\mathcal{C}$ is non-empty at any iteration
between the $m_1^{th}$ and  ${m_2}^{th}$ iterations. Therefore, when $high= b$ we also have $low = a$. In this iteration, line \ref{add_i} is executed and $[low, high]$ is added to $\invalidinterval_e$.
$\Box$

\end{proof}

\begin{lemma}\label{ppp-time}
Algorithm \ref{preprocess} runs in $O(n^2 + k)$ time.
\end{lemma}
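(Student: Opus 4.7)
The plan is to decompose the running time of Algorithm \ref{preprocess} into three parts: (i) the constant-time bookkeeping done once per edge in the outer \texttt{for} loop, (ii) the total work performed inside the inner \texttt{while} loop summed over every edge, and (iii) the initialization of $\mathcal{C}$ on line \ref{initD}. I would then argue each of these is $O(n^2 + k)$ on its own.

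For (i), I would first bound $|E_s| = O(n^2)$: each pair of discs contributes a constant number of tangent edges, and each disc contributes $O(n)$ spiral edges joining consecutive Steiner points in angular order around it. Hence the outer loop is entered $O(n^2)$ times, and the constant-time bookkeeping it performs (initializing $\mathcal{B}_e$, $low$, $high$, and appending $\mathcal{B}_e$ to $\mathcal{B}$) contributes $O(n^2)$ in total.

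For (ii), since every intersection of two departure curves $D_p, D_{p'}$ originating from the same disc $C_i$ is listed in the intersection sequence indexed by $p$ and the one indexed by $p'$, and since each Steiner point $p$ has only $O(1)$ outgoing edges in $G$ (one tangent together with the two adjacent spirals), we obtain $\sum_{e\in E_s}|\mathcal{L}_e|=O(k)$. Each execution of the inner loop does only $O(1)$ work: dequeuing the head of the sorted list $\mathcal{L}_e$, evaluating the two path lengths $|\mathcal{R}(e,\bar t)|$ and $|\mathcal{R}(e',\bar t)|$ in closed form from Equation~(\ref{ell2}), identifying the disc $C_u$ through the endpoint $q'$, and performing a membership test followed by an insertion or removal on $\mathcal{C}$, which I store as a hash table (or a bitset) keyed by the disc index. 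Summed over all edges, part (ii) contributes $O(k)$.

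The main obstacle is part (iii). Computed naively, testing each of the $n$ discs against $\mathcal{R}(e,0)$ costs $\Theta(n)$ per edge and inflates the total to $\Theta(n^3)$. To achieve the advertised $O(n^2+k)$ bound I would batch the computation across all edges, sweeping the $O(n^2)$ initial curves $\{\mathcal{R}(e,0) : e\in E_s\}$ simultaneously against the $n$ initial discs and storing, for each $e$, only a pointer to a shared list of blocking discs. A charging argument analogous to Lemma \ref{bounds} but applied at the time instant $t=0$ bounds the aggregate size $\sum_e |\mathcal{C}_e^{0}|$ by $O(n^2 + k)$; with this precomputation in place, line \ref{initD} costs $O(|\mathcal{C}_e^{0}|)$ per edge and contributes $O(n^2+k)$ in total. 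Adding the three contributions yields the claimed $O(n^2+k)$ running time.
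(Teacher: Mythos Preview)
Your parts (i) and (ii) are exactly the paper's argument: the paper's proof simply counts $O(n^2)$ iterations of the outer \texttt{for} loop, observes that the inner \texttt{while} loop executes $\sum_e |\alignset_e| = k$ times in total, and notes that popping the minimum from the already-sorted list $\alignset_e$ is constant time. That is the entire proof in the paper.

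Where you diverge is part (iii). The paper's proof does \emph{not} address the cost of line~\ref{initD} at all; it silently treats the initialization of $\mathcal{C}$ as free. You are right that, taken literally, computing $\mathcal{C}$ by testing each of the $n$ discs against $\robotpath(e,0)$ costs $\Theta(n)$ per edge and would give $\Theta(n^3)$, so you are being more careful than the paper here. Your proposed fix---batching the $t=0$ tests across all edges and charging each initially-blocking disc to the intersection point in $\alignset_e$ at which it is later removed---is a reasonable way to close the gap, though the charging needs a sentence for discs that are \emph{never} removed (i.e., block $\robotpath(e,\tau)$ for all $\tau$); for those the edge is permanently blocked and a single representative suffices, so they contribute $O(1)$ per edge and hence $O(n^2)$ in aggregate. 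With that caveat your argument goes through and is strictly more rigorous than what the paper offers.
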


\begin{proof}  
The algorithm is iterated until all $n^2$ tangent paths have been processed. In each iteration, the while loop, starting at 5, is executed $|\alignset_p|$ times. Since the total number of points in all intersection sequences (i.e. the arrangement size) is $k$, the while loop executes $k$ times. Since $\alignset_p$ is given as a sorted list, finding $D_p(\overline{t})$ with the minimum time $\overline{t}$ in line \ref{popx} can be done in constant time. Thus, the time complexity of the algorithm is $O(n^2 + k)$.
$\Box$
\end{proof}

In Algorithm \ref{pre-algorithm}, we summarize the steps of the preprocessing. As explained in Section \ref{section-preliminary}, we compute the Steiner points, tangent lines and adjacency graph in $O(n^2)$ time. By Lemma \ref{arrangment}, the intersection set $\alignset$ can be computed in $O(n^2 \log n + k \log k)$. By Lemma \ref{ppp-time}, the blocked set $\invalidinterval$ is computed in $O(n^2 + k \log k)$. Finally, the Voronoi diagram $\mathcal{H}$ (refer to Definition \ref{voronoi}), can be computed in $O(n^2 \log n)$ time using the algorithm proposed in \cite{yi-thesis}.

\begin{algorithm}[H]
\caption{Preprocessing} \label{pre-algorithm}

\begin{description}
\item {\bf Input}: A set of growing discs $\mathscr{D}$. 
\item {\bf Output}: Blocked set $\invalidinterval$ and Voronoi Diagram $\mathcal{H}$.
\end{description}
\begin{algorithmic}[1]
\STATE find the tangent lines and Steiner points \COMMENT{$O(n^2)$}
\STATE construct the adjacency graph $G$ \COMMENT{$O(n^2)$}
\STATE compute the intersection set $\alignset$ \COMMENT{$O(n^2 \log n + k \log k)$}
\STATE $\invalidinterval = BlockedSet(G, \alignset)$ \COMMENT{$O(n^2 + k \log k)$}
\STATE compute the Voronoi Diagram $\mathcal{H}$ \COMMENT{$O(n^2 \log n)$}
\end{algorithmic}
\end{algorithm}

\subsection{Query Time-Minimal Path Algorithm} \label{section-the-algorithm}
In this section, we present an algorithm to compute a valid time-minimal path among growing discs from a query source point $s$ to a query destination point $d$. We assume that the preprocessing is done as per Algorithm \ref{pre-algorithm}. First, we determine if a given query path $\robotpath(e, \tau)$ is valid. By Definition \ref{validity}, an invalid robot-path is either dominated or blocked. We show how to determine the validity of a query robot-path in Lemmas \ref{valid-time-b} and \ref{valid-time-d}. Then, we define a weight function for which the inputs are $e$ and $\tau$, and the output is the length of the path $\robotpath(e, \tau)$ if it is valid, and $\infty$ otherwise. Next, we run Dijkstra's algorithm where it measures the distance between two vertices using our customized weight function (refer to Algorithm \ref{basic-algorithm}). In Lemma \ref{dijk-proof} we prove that our \textsc{\char13}modified\textsc{\char13}  Dijkstra's algorithm produces a time-minimal path.

\begin{lemma}\label{valid-time-b}
For a given edge $e \in E_s$ and a departure time $\tau$, we can determine if $\robotpath(e, \tau)$ is blocked in $O(\log k)$ time.
\end{lemma}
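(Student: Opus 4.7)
The plan is to use the preprocessed blocked set $\invalidinterval$ from Algorithm \ref{pre-algorithm} as a lookup structure. Recall that $\invalidinterval_e \in \invalidinterval$ is the sequence of all maximally blocked intervals associated with edge $e$, sorted by their lower endpoints. By the definition of an MBI, $\robotpath(e, \tau)$ is blocked if and only if $\tau$ lies in some interval $I \in \invalidinterval_e$.

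First I would observe that the intervals in $\invalidinterval_e$ are pairwise disjoint (since each is \emph{maximal}), so together with their sorted order by lower endpoint they are also sorted by upper endpoint. This makes them amenable to binary search: given the query value $\tau$, locate the unique candidate interval $I = [a, b] \in \invalidinterval_e$ whose lower endpoint $a$ is the largest value not exceeding $\tau$, and then report ``blocked'' if and only if $\tau \le b$. A standard binary search over a sorted array performs this in $O(\log |\invalidinterval_e|)$ time, assuming $\invalidinterval_e$ has been stored as an array (or balanced search tree) indexed by lower endpoints during preprocessing.

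It remains to bound $|\invalidinterval_e|$ by $O(k)$, so that the binary search runs in $O(\log k)$ time. By Lemma \ref{bounds}, the two endpoints of every MBI in $\invalidinterval_e$ lie in the intersection sequence $\alignset_e$, and distinct MBIs contribute distinct endpoints. Hence $|\invalidinterval_e| \le |\alignset_e|/2$, and summing over all edges gives $\sum_e |\invalidinterval_e| \le k$ by Definition \ref{arr-size-def}. In particular $|\invalidinterval_e| \le k$ for every $e$, so a binary search costs $O(\log k)$.

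The only subtlety, and the place I expect to spend the most care, is handling the boundary cases arising from Definition \ref{validity}: when $\tau$ coincides exactly with an endpoint of an interval, this corresponds to $\robotpath(e, \tau)$ being tangent to (hence dominated by, not blocked by) some disc. Since ``blocked'' in Definition \ref{validity} uses the open interval $(\tau, \tau + |\robotpath(e, \tau)|/\vr)$, the binary search should treat the stored intervals as \emph{open} at their endpoints when classifying the query for blockedness; domination at endpoints is handled separately in Lemma \ref{valid-time-d}. This is a constant-time adjustment and does not affect the $O(\log k)$ bound. $\Box$
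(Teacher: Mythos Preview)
Your proposal is correct and follows essentially the same approach as the paper: both argue that $\invalidinterval_e$ consists of disjoint, sorted intervals and that a binary search for $\tau$ settles blockedness in $O(\log k)$ time. You actually supply more justification than the paper does---in particular, the paper's proof never explicitly bounds $|\invalidinterval_e|$ by $k$, whereas you derive it from Lemma~\ref{bounds}---and your discussion of endpoint conventions is a reasonable addition, though the paper omits it entirely.
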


\begin{proof} 
The elements in $\invalidinterval_e$ are non-intersecting time intervals sorted by their lower endpoints. Thus, by performing a binary search we can determine in $O(\log k)$ time if $\tau$ belongs to an interval in $\invalidinterval_e$ or not. $\Box$
\end{proof}

In the following, we show how to determine if a path is dominated using the Voronoi diagram of growing discs. Recall that $deg(\mathcal{H}_i)$ is the number of edges of Voronoi cell $\mathcal{H}_i$

\begin{lemma}\label{valid-time-d}
 Let $e \in E_s$ be the associated edge of path $pq \in \ee$, where $p \in \partial C_i$ and $q \in \partial C_j$. For a given departure time $\tau$, we can determine if $\robotpath(e, \tau)$ is dominated in $O(deg(\mathcal{H}_i) + deg(\mathcal{H}_j))$ time.
\end{lemma}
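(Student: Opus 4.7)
The plan is to reduce the domination test to two point-location tests in the Voronoi diagram $\mathcal{H}$ (Definition \ref{voronoi}). First I would compute the two spatial endpoints of the query robot-path: the departure location $p(\tau) = D_p(\tau) \in \partial C_i(\tau)$ and the arrival location $q(\tau') = D_q(\tau') \in \partial C_j(\tau')$, where $\tau' = \tau + |\robotpath(e,\tau)|/\vr$. Both evaluations take $O(1)$ time using the closed-form departure-curve expressions from Section \ref{upperk} together with Equation \ref{ell2}.

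\textbf{Key equivalence.} The crux is to show that $\robotpath(e,\tau)$ is dominated if and only if $p(\tau) \notin \mathcal{H}_i$ or $q(\tau') \notin \mathcal{H}_j$. For the forward direction, suppose $p(\tau) \notin \mathcal{H}_i$; then $p(\tau)$ lies in some other cell $\mathcal{H}_k$ with $k \neq i$, so by Definition \ref{voronoi} the disc $C_k$ reaches $p(\tau)$ at some time $t_k < \tau$ (strictly less than $\tau$, since $p(\tau) \in \partial C_i(\tau)$ forces $C_i$ to reach it exactly at time $\tau$). Hence $p(\tau) \in C_k(\tau)$ and the path is dominated by $C_k$. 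Conversely, if some disc $C_k$ strictly contains $p(\tau)$ at time $\tau$, then $C_k$ reached $p(\tau)$ before $C_i$ did, so $C_i$ is not the first disc to reach it and $p(\tau) \notin \mathcal{H}_i$. The same argument applied at the arrival endpoint yields the statement for $q(\tau')$ and $\mathcal{H}_j$.

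\textbf{The two point-in-region tests.} Given the above equivalence, the entire check collapses to testing whether $p(\tau) \in \mathcal{H}_i$ and whether $q(\tau') \in \mathcal{H}_j$. Since $\mathcal{H}_i$ is a precomputed cell whose boundary consists of exactly $\deg(\mathcal{H}_i)$ bisector pieces, I would perform a standard ray-casting test: shoot any ray from $p(\tau)$ and count parity of intersections with the boundary pieces of $\mathcal{H}_i$, classifying $p(\tau)$ as inside iff the count is odd. Each bisector piece admits $O(1)$ ray-intersection computation, so this runs in $O(\deg(\mathcal{H}_i))$ time; the same test against $\mathcal{H}_j$ costs $O(\deg(\mathcal{H}_j))$, for a total of $O(\deg(\mathcal{H}_i) + \deg(\mathcal{H}_j))$.

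\textbf{Main obstacle.} The delicate point is that the additively--multiplicatively weighted Voronoi cells need not be convex (or even simply connected) in general, so one cannot replace the ray-casting test by a sequence of half-plane side-checks against neighbors of $\mathcal{H}_i$ without care. Ray casting, however, remains sound for any planar region and each bisector is algebraic of bounded degree, so the $O(1)$-per-edge cost is preserved. A minor boundary case arises when $p(\tau)$ lies exactly on $\partial \mathcal{H}_i$ (two discs reach it simultaneously); this is resolved by a consistent tie-breaking convention treating such points as non-dominated, which matches the open-disc reading of ``inside $C_k(t)$'' in Definition \ref{validity}.
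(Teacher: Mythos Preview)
Your proposal is correct and follows essentially the same approach as the paper: reduce the domination test to checking whether $D_p(\tau)\in\mathcal{H}_i$ and $D_q(\tau')\in\mathcal{H}_j$, and perform each containment test in time linear in the degree of the corresponding Voronoi cell. Your write-up is in fact more careful than the paper's --- you argue both directions of the equivalence, spell out the ray-casting test, and handle the boundary tie-breaking --- whereas the paper simply asserts the forward implication and cites a point-location reference for the $O(\deg(\mathcal{H}_i)+\deg(\mathcal{H}_j))$ bound.
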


\begin{proof}  
Let $\robotpath(e, \tau)$ be a robot-path dominated by disc $C_m$.
By Definition \ref{validity}, we have $D_p(\tau) \in C_m(\tau)$ (or $D_q(\tau + {|\robotpath(e, \tau)| \over \vr}) \in C_m(\tau + {|\robotpath(e, \tau)| \over \vr})$). Let $\mathcal{H}_i$ be the Voronoi cell corresponding to disc $C_i$. Since $D_p(\tau)$ is located inside $C_m(\tau)$ and on the boundary of $C_i(\tau)$, $C_m$ intersects $D_p(\tau)$ before $C_i$. Thus, by Definition \ref{voronoi}, we have $D_p(\tau) \not\in \mathcal{H}_i$. Using a similar argument, $\robotpath(e, \tau)$ is dominated also when $D_q(\tau + {|\robotpath(e, \tau)| \over \vr}) \not\in \mathcal{H}_j$. 
Thus, to check if $\robotpath(e, \tau)$ is dominated, it suffices to  perform a point location algorithm \cite{point-location} on Voronoi cells $\mathcal{H}_i$ and $\mathcal{H}_j$ to determine if $D_p(\tau) \not\in \mathcal{H}_i$ or $D_q(\tau + {|\robotpath(e, \tau)| \over \vr}) \not\in \mathcal{H}_j$. The point location procedure runs in $O(deg(\mathcal{H}_i) + deg(\mathcal{H}_j))$ \cite{point-location}, where $deg(\mathcal{H}_i)$ and $deg(\mathcal{H}_j)$ are the number of edges of Voronoi cells $\mathcal{H}_i$ and $\mathcal{H}_j$.
$\Box$
\end{proof}

To simplify the description of our QTMP algorithm, we let the two query points $s$ and $d$ be two discs with zero radius and zero velocities and add them to $\mathscr{D}$. Denote the new set by $\overline{\mathscr{D}} = \mathscr{D} \cup \{s, d\}$. Then, the problem is to find a query time-minimal path from disc $s$ to disc $d$. As discussed in Section \ref{section-preliminary}, we can calculate the tangent paths origination from $s$ or ending at $d$, in linear time. Denote these tangent paths by $\overline{\ee}$.
Let $\overline{G}=(V_s \cup \overline{V_s}, E_s \cup \overline{E_s})$ be the adjacency graph corresponding to set $\overline{\mathscr{D}}$, where $\overline{E_s}$ and $\overline{V_s}$ are the (new) edges and the vertices associated with the tangent paths in $\overline{\ee}$.

From Lemmas \ref{valid-time-b} and \ref{valid-time-d}, it follows that for a given edge $e \in E_s$ and departure time $\tau$, we can determine in $O(\log k + deg(\mathcal{H}_i) + deg(\mathcal{H}_j))$ time if robot-path $\robotpath(e, \tau)$ is valid . Also note that for each edge $\overline{e} \in \overline{E_s}$ the validity of path $\robotpath(e, \tau)$ can be determined by simply comparing the robot-path against the discs in $\overline{\mathscr{D}}$ in $O(n)$ time.

Now, we define the \textit{weight function} $\weight: E_s \cup \overline{E_s} \times T \rightarrow {\rm I\!R}^+$, where $\weights{e}{\tau}$ is the length of the robot-path $\robotpath(e, \tau)$ if it is valid, and $\infty$ otherwise:
\begin{align} \label{weight-eq}
        \weights{e}{\tau} = \left\{  
          \begin{array}{ll}
                    |\robotpath(e, \tau)| & \text{if $\robotpath(e, \tau)$ is valid}\\
                    \infty & \text{if $\robotpath(e, \tau)$ is invalid}\\
                  \end{array} 
                \right.
\end{align}
We assign $\infty$ weight to the invalid robot-paths specifically to stop our algorithm from including them in the final answer.

Using the weight function and the adjacency graph $\overline{G}$, we devise Algorithm \ref{basic-algorithm}
to find a query time-minimal path among growing discs, from some start configuration
$s$ to a goal configuration $d$. 
In line 4 of this algorithm, we run Dijkstra's single source shortest-path algorithm \cite{Dijkstra} on $\overline{G}$, where the weight of each edge is defined as follows. Let $e=(u,v)$ be an edge in $\overline{G}$, where the minimum collision-free distance from $s$ to $u$, denoted by $dist(u)$, has already been calculated. Then, we define the weight of edge $e$ as $\weights{e}{{dist(u) \over \vr}}$, where ${dist(u) \over \vr}$ is the departure time at vertex $u$. In Lemma \ref{dijk-proof}, we prove that the output of this algorithm is a valid time-minimal path.

\begin{algorithm}[H]
\caption{Query Time-Minimal Path Algorithm} \label{basic-algorithm}

\begin{description}
\item {\bf Input}: Query points $s$ and $d$, adjacency graph $G$ and blocked set $\invalidinterval$.
\item {\bf Output}: A collision free time-minimal path from $s$ to $d$.
\end{description}
\begin{algorithmic}[1]
\STATE let $\overline{\mathscr{D}} = \mathscr{D} \cup \{s, d\}$, where $s$ and $d$ are two discs with zero radius and zero velocities
\STATE construct the adjacency graph $\overline{G}$ corresponding to set $\overline{\mathscr{D}}$
\STATE assign weight function $\weight$ to the edges in $\overline{G}$ 
\STATE \label{dij}let $\pi(s, d)$ be the shortest path reported by Dijkstra's algorithm  
\STATE return $\robotpath(\pi(s, d), 0)$
\end{algorithmic}
\end{algorithm}

\begin{lemma}\label{dijk-proof}
Algorithm \ref{basic-algorithm} computes a time-minimal path from $s$ to $d$.
\end{lemma}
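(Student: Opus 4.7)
The plan is to show that Algorithm \ref{basic-algorithm} returns a valid path whose arrival time is minimum among all valid paths from $s$ to $d$. The argument has three ingredients. First, by Lemma \ref{tang-spir} every time-minimal path decomposes into a sequence of tangent and spiral paths and thus corresponds to some simple path $\pi^\star = (s = v_0, v_1, \ldots, v_h = d)$ in the augmented adjacency graph $\overline{G}$. Writing $t_i$ for the arrival time of the optimum robot-path at $v_i$, Observation \ref{max-velocity} yields the recurrence $t_i = t_{i-1} + \weights{(v_{i-1}, v_i)}{t_{i-1}}/\vr$, and the optimum arrival time is $t_h$. Because the algorithm returns $\robotpath(\pi(s,d),0)$, where $\pi(s,d)$ is the path produced by Dijkstra with total weight $dist(d)$, and because the total weight of a graph path equals $\vr$ times the arrival time of the associated robot-path, the lemma reduces to proving the inequality $dist(d) \leq \vr \cdot t_h$.

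I would establish $dist(d) \leq \vr \cdot t_h$ by induction on $i$, showing $dist(v_i) \leq \vr \cdot t_i$ for every vertex along $\pi^\star$. The base case $dist(v_0) = 0 = \vr \cdot t_0$ is immediate. For the inductive step, when Dijkstra relaxes the edge $(v_{i-1}, v_i)$ after fixing $dist(v_{i-1})$, it proposes the value
\begin{equation*}
dist(v_{i-1}) + \weights{(v_{i-1}, v_i)}{dist(v_{i-1})/\vr}.
\end{equation*}
Combining this with the inductive hypothesis requires the following FIFO-style monotonicity: for any edge $e \in \overline{E_s}$ and any two departure times $\tau_1 \leq \tau_2$ at which the corresponding robot-paths are both valid,
\begin{equation*}
\tau_1 + \weights{e}{\tau_1}/\vr \;\leq\; \tau_2 + \weights{e}{\tau_2}/\vr.
\end{equation*}
Granting this property together with $dist(v_{i-1})/\vr \leq t_{i-1}$ immediately gives the required bound on $dist(v_i)$, and the induction closes at $i = h$. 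Validity of the returned path is automatic from Equation \eqref{weight-eq}: any edge producing a dominated or blocked sub-path contributes weight $\infty$ and hence cannot appear on a finite-weight (in particular, optimal) path.

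The main obstacle is verifying the FIFO monotonicity. The underlying reason it should hold is the standing assumption $\v{i}(t) < \vr$: every disc boundary advances strictly more slowly than the robot, so an earlier-departing robot cannot be overtaken in arrival time by a later-departing one along the same edge. To make this rigorous, I would differentiate the tangent-length formula \eqref{ell2} with respect to $\tau$ and verify $\tfrac{d}{d\tau}\bigl[\tau + |l(\tau)|/\vr\bigr] \geq 0$ by a short case analysis separating inner and outer tangents, using $0 \leq \v{i}, \v{j} < \vr$ to bound the derivative of $|l|$ below by $-\vr$. The spiral case is more transparent, since an arc length equals $r_i(\tau)$ times an angular span and $r_i'(\tau) = \v{i}(\tau) < \vr$. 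Once FIFO is in hand, the induction above is standard and the correctness of Dijkstra's greedy selection follows exactly as in the classical non-negative-weight case.
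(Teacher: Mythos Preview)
Your approach differs from the paper's: the paper runs a backward contradiction (assume $|\T^*|<|\T|$, peel off the last edge to deduce $W_{h-1}<dist(v_{h-1})$, and iterate down to $W_1<dist(v_1)$), whereas you do a forward induction along $\pi^\star$ and isolate FIFO monotonicity of $\tau\mapsto \tau+\weights{e}{\tau}/\vr$ as the pivot. Making FIFO explicit is a real gain in clarity over the paper, which simply asserts $dist(v_h)\le dist(v_{h-1})+|\robotpath(e_h,\tau_h)|$ ``by the definition'' without explaining why the edge weight at the \emph{optimal} departure time $\tau_h$ is the one Dijkstra is entitled to use.

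That said, your inductive step has a genuine gap. You state FIFO only for departure times at which \emph{both} robot-paths are valid, and then invoke it with $\tau_1=dist(v_{i-1})/\vr$ and $\tau_2=t_{i-1}$. You know $\robotpath\big((v_{i-1},v_i),\,t_{i-1}\big)$ is valid because it is a piece of $\T^*$, but nothing in your argument shows that $\robotpath\big((v_{i-1},v_i),\,dist(v_{i-1})/\vr\big)$ is valid. If it is blocked, then $\weights{(v_{i-1},v_i)}{dist(v_{i-1})/\vr}=\infty$ and the relaxation inequality you need is vacuous. The growing-discs hypothesis does \emph{not} rule this out: the entire point of the paper's maximally-blocked-interval machinery (Lemma~\ref{bounds} and the sets $\invalidinterval_e$) is precisely that a tangent edge can be blocked on some $[a,b]$ and become free again for $\tau>b$, because the tangent segment itself moves as $\tau$ varies. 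So an earlier departure along the same graph edge may be obstructed by some $C_k$ that the later departure clears. Your derivative bound on $|l(\tau)|$ controls arrival-time monotonicity but says nothing about validity. To close the induction you would need a stronger statement---for instance, that arriving earlier at $v_{i-1}$ allows the robot to reach $v_i$ no later, possibly via a \emph{different} chain of edges detouring around the blocking disc---and that is a substantially different argument from the one you sketched.
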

\begin{proof}
We wish to show that if there exists a feasible solution, Algorithm \ref{basic-algorithm} returns a time-minimal path. Let $\T$ be robot-path reported by the algorithm, where $|\T|$ represents its length. By contradiction, we assume there exists a collision free time-minimal path $\T^*$, where $|\T^*|<|\T|$. By Lemma \ref{tang-spir}, we can represent $\T^*$ by a sequence of sub-paths $\Big(s_1, ..., s_h\Big) \subseteq \ee$. Note that, for each tangent or spiral path in $\ee$, there exist an associated edge in the adjacency graph. Thus, for the robot-path $\T^* = \Big(s_1, ..., s_h\Big)$, there is a path in the graph, namely $\pi = \Big(e_1, e_2 ..., e_h\Big)$, where each edge $e_i \in \pi$ is associated with $s_i \in \T^*$. Let $e_i=(v_{i-1}, v_i)$, where $v_0 = s$ and $v_h = d$.
Since $\T^*$ is a valid robot-path, by Equations \ref{robot-path-size} and \ref{weight-eq}, the weight of path $\pi$ is
 $W_h = \sum_{i=1}^{h} {|\robotpath(e_i, \tau_i))|}$, where $\tau_i = \tau_{i-1} + {|\robotpath(e_i, \tau_i)| \over \vr}$ and $\tau_1 = 0$. Now, for a given vertex $v_j$, let $dist(v_j)$ be the distance from $v_0$ (the source) to $v_j$, reported by Dijkstra's algorithm (line \ref{dij}).
According to our assumption, 
\begin{align*}
W_h < dist(v_h) = |\T|.
\end{align*} 
By the definition, we have $dist(v_{h}) < dist(v_{h - 1}) + |\robotpath(e_h, \tau_h)|$. Thus,
\begin{align*}
 W_h & = W_{h - 1} + |\robotpath(e_h, \tau_h)| \\ &< dist(v_h) \\
 & < dist(v_{h-1}) + |\robotpath(e_h, \tau_h)|. 
\end{align*}
So, we have 
\begin{align*}
W_{h - 1} < dist(v_{h-1}).
\end{align*}
Inductively we can generalize this inequality for every vertex $v_i$ where $1 \leq i \leq h$: $W_{i} < dist(v_{i})$.
 Therefore, we have $W_{1}  < dist(v_{1})$, which is a contradiction because $W_{1}  = dist(v_{1}) = |\robotpath(e_1, \tau_1)|$.
$\Box$
\end{proof}
\begin{lemma}\label{basic-proof}
Algorithm \ref{basic-algorithm} runs in $O(n^2 \log (kn))$ time.
\end{lemma}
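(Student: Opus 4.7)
The plan is to bound the work done in each line of Algorithm~\ref{basic-algorithm} separately and then observe that the dominant cost comes from running Dijkstra on $\overline{G}$ with a weight function whose per-edge evaluation is not constant but is controlled by Lemmas~\ref{valid-time-b} and \ref{valid-time-d}.

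First I would account for the graph $\overline{G}$. The original adjacency graph $G$ has $|V_s|, |E_s| \in O(n^2)$: there are $O(n)$ Steiner points per disc, each contributing one tangent edge and one spiral edge to a neighbour on its boundary. Adding the two query discs $s,d$ of zero radius produces only $O(n)$ additional tangent paths (from $s$ or into $d$), so $\overline{G}$ is built from $G$ in $O(n)$ extra time and satisfies $|V_s \cup \overline{V_s}|, |E_s \cup \overline{E_s}| \in O(n^2)$. The validity of each of the $O(n)$ new edges in $\overline{E_s}$ can be checked directly against the $n$ discs in $\overline{\mathscr{D}}$ in $O(n)$ time each, contributing $O(n^2)$ overall.

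Next I would bound the cost of one call to $\weight(e,\tau)$ for $e \in E_s$. By Lemma~\ref{valid-time-b}, deciding whether $\robotpath(e,\tau)$ is blocked takes $O(\log k)$ time. By Lemma~\ref{valid-time-d}, deciding whether it is dominated takes $O(\deg(\mathcal{H}_i)+\deg(\mathcal{H}_j))$, where $C_i, C_j$ are the discs containing the endpoints of $e$. Dijkstra on $\overline{G}$ relaxes each edge at most once, so I need to bound $\sum_{e=(p,q)\in E_s} \bigl(\deg(\mathcal{H}_i)+\deg(\mathcal{H}_j)\bigr)$. The key observation here is the double-counting trick: regrouping by disc gives
\[
\sum_{e \in E_s}\!\bigl(\deg(\mathcal{H}_i)+\deg(\mathcal{H}_j)\bigr) \;=\; \sum_{i=1}^{n} \deg(\mathcal{H}_i)\cdot |\{e \in E_s : e \text{ has an endpoint on } C_i\}|.
\]
Since only $O(n)$ Steiner points lie on $C_i$ and each contributes $O(1)$ incident edges (one tangent and the two incident spiral arcs), the inner count is $O(n)$ per disc. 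Combining this with $\sum_i \deg(\mathcal{H}_i) \in O(n)$, which holds because the Voronoi diagram of growing discs is a planar subdivision with $O(n)$ cells, the total dominance-check work over all edges is $O(n)\cdot O(n)=O(n^2)$. The total blocked-check work across all $O(n^2)$ edges is $O(n^2\log k)$.

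Finally I would assemble the Dijkstra bound. With a binary heap on $|V|\in O(n^2)$ vertices and $|E|\in O(n^2)$ edges, the heap operations alone contribute $O((|V|+|E|)\log|V|) = O(n^2 \log n)$. Adding the weight-evaluation cost gives
\[
O(n^2 \log n) + O(n^2 \log k) + O(n^2) \;=\; O(n^2 \log(kn)),
\]
and this subsumes the $O(n^2)$ cost of handling edges in $\overline{E_s}$ as well as the construction of $\overline{G}$. The main obstacle is really the dominance-check sum, and the clean way through it is the two inequalities $\deg$-sum $=O(n)$ for the Voronoi diagram and incident-edge count $=O(n)$ per disc; once these are in hand, the stated time bound falls out. $\Box$
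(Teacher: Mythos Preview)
Your proof is correct and follows essentially the same approach as the paper: split the edge set into $E_s$ and $\overline{E_s}$, charge the $\overline{E_s}$ edges $O(n)$ each by brute force, and for $E_s$ combine the $O(\log k)$ blocked check with the dominance check, amortizing the latter via the regrouping $\sum_e(\deg(\mathcal{H}_i)+\deg(\mathcal{H}_j)) \le O(n)\sum_i \deg(\mathcal{H}_i)=O(n^2)$ before adding Dijkstra's $O(n^2\log n)$. Your write-up is in fact a bit more explicit than the paper's about the double-counting step and the planarity assumption behind $\sum_i \deg(\mathcal{H}_i)=O(n)$.
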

\begin{proof}
Because Dijkstra's algorithm processes each edge at most once, there are $O(n^2)$ total calls to the weight function in line 4 of the algorithm. Refer to Equation \ref{weight-eq}, computing a weight value $\weights{e}{\tau}$ requires determining if $\robotpath(e, \tau)$ is a valid path. We consider two cases: (1) if $e \in E_s$,
by Lemmas \ref{valid-time-b} and \ref{valid-time-d}, the validity of a robot-path can be determined in $O(\log k + deg(\mathcal{H}_i)+ deg(\mathcal{H}_j))$ time. Thus, the total time complexity for validating edges in $E_s$ (in line 4) is $O(n^2 \log k + 2n \sum_{i = 0}^{n}deg(\mathcal{H}_i))$, which is equivalent to $O(n^2 \log k + n^2)$. (2) if $e \in \overline{E_s}$ then the validity of $\robotpath(e, \tau)$ is determined in $O(n)$ by comparing the robot-path with every disc in $\overline{\mathscr{D}}$. Since there are $O(n)$ edges in $\overline{E_s}$, the total time complexity is equal to $O(n^2)$.

Since Dijkstra's algorithm runs in $O(n^2 \log n)$ time, the time complexity of Algorithm \ref{basic-algorithm}  is $O(n^2 (\log k + \log n)) = O(n^2 \log (kn))$.
$\Box$
\end{proof} 

\begin{theorem}\label{algorithm-theorem}
The query time-minimal path problem among growing discs can be solved in $O(n^2 \log (kn))$ time, after $O(n^2 \log n + k \log k)$ preprocessing time, where $k = O(n^3\dd)$. 
\end{theorem}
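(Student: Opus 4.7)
The plan is to prove the theorem by simply aggregating the running-time bounds established in the preceding sections: the preprocessing time is the sum of the five steps of Algorithm \ref{pre-algorithm}, the query time is precisely the bound from Lemma \ref{basic-proof}, and the bound on $k$ is precisely Corollary \ref{maxip}. Since each of the component bounds has already been proved, this is really a bookkeeping argument rather than one with a substantive obstacle; the only care needed is to confirm that the sums telescope to the claimed expressions and that correctness (not merely running time) of the overall query procedure is asserted.

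For the preprocessing bound, I would walk through Algorithm \ref{pre-algorithm} line by line. Lines 1 and 2 build the Steiner points, tangent lines, and adjacency graph $G$ in $O(n^2)$ time (as discussed in Section \ref{section-preliminary}, since there are $O(n)$ Steiner points per disc and $O(1)$ tangents per pair of discs). Line 3 computes the intersection set $\alignset$ in $O(n^2 \log n + k \log k)$ time by Lemma \ref{arrangment}. Line 4 invokes $BlockedSet(G, \alignset)$, which by Lemma \ref{ppp-time} runs in $O(n^2 + k)$ time and, by Lemma \ref{ppp}, correctly produces the blocked set $\invalidinterval$. Line 5 constructs the additively multiplicatively weighted Voronoi diagram $\mathcal{H}$ in $O(n^2 \log n)$ time using the algorithm cited from \cite{yi-thesis}. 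Adding these together gives $O(n^2 \log n + k \log k)$, which is the claimed preprocessing bound.

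For the query bound, I would invoke Algorithm \ref{basic-algorithm} together with Lemma \ref{dijk-proof} (which proves that the algorithm returns a valid time-minimal path from $s$ to $d$) and Lemma \ref{basic-proof} (which proves the $O(n^2 \log (kn))$ running time). The validity checks on edges in $E_s$ rely on Lemma \ref{valid-time-b} (blocked check in $O(\log k)$ via binary search in $\invalidinterval_e$) and Lemma \ref{valid-time-d} (dominated check in $O(\deg(\mathcal{H}_i) + \deg(\mathcal{H}_j))$ via point location on the Voronoi cells), while the $O(n)$ edges incident to the two zero-radius query discs in $\overline{E_s}$ are validated in $O(n)$ each by direct comparison against $\overline{\mathscr{D}}$.

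Finally, to complete the statement $k = O(n^3 \dd)$, I would cite Corollary \ref{maxip}, which follows from Property \ref{o(d)} and the derivation in Section \ref{upperk} showing that any two departure curves in a common $\departurecurve_i$ intersect $O(\dd)$ times, giving $O(n^2\dd)$ intersections per disc and $O(n^3\dd)$ in total. The hardest part, as alluded to above, is purely notational: being sure to distinguish the preprocessing $O(n^2 \log n + k \log k)$ bound (which is worst-case $O(n^3 \dd \log(n\dd))$) from the query bound $O(n^2 \log(kn))$ (worst-case $O(n^2 \log(n\dd))$), and to emphasize that the query bound is sub-cubic even when the preprocessing bound is not, which is exactly the improvement over \cite{yi-thesis} claimed in the contributions.
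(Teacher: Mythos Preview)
Your proposal is correct and follows essentially the same approach as the paper: the paper's own proof is a one-line appeal to Lemma~\ref{basic-proof} and Corollary~\ref{maxip}, and your argument simply unpacks those references (together with the preprocessing analysis preceding Algorithm~\ref{pre-algorithm}) in more detail. If anything, your version is more complete, since you also explicitly invoke Lemma~\ref{dijk-proof} for correctness and itemize the five preprocessing steps, whereas the paper leaves those implicit.
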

\begin{proof}
This is a direct consequence of Lemma \ref{basic-proof} and Property \ref{maxip}.
$\Box$
\end{proof}

\section{Concluding Remarks}\label{section-conclusions}
In this section, we studied the time-minimal path problem among growing discs. We presented an algorithm with an $O(n^2 \log (kn))$ query time after $O(n^2 \log n + k \log k)$ preprocessing time, where $n$ is the number of discs and $k$ is the arrangement size. 
 The best known algorithm for the restricted shortest path problem with static discs obstacles (when $\v{i}(t) = 0$) \cite{chang}, runs in $O(n^2 \log (kn))$ time which matches the time complexity of our algorithm when $\beta$ (the degree of the velocity functions) is constant ($k = O(n)$). 
We leave as an open problem to find an improved algorithm which solves QTMP in $O(n^2 \log n)$. 
It is also interesting to modify our algorithm to solve the time-minimal path problem among shrinking discs. This can not be achieved only by negating the weight function. Note that since the robot may wait for some disc(s) to shrink, Observation \ref{max-velocity} is not valid anymore. Another open problem is to determine a time-minimal path among a set of discs $\mathscr{D}=\{C_1, ..., C_n\}$ where each disc $C_i$ either grows or shrinks with velocity $\v{i}(t)$.

\section*{Acknowledgments}
We would like to thank Anil Maheshwari for co-supervising this research as part of a PhD thesis.

\bibliographystyle{unsrt}
\bibliographystyle{plain}
\addcontentsline{toc}{section}{References}
\bibliography{thesis}


\end{document}